\DeclareMathSymbol{\Not}{\mathrel}{symbols}{"36}
\DeclareMathSymbol{\not}{\mathrel}{symbols}{"36}
\def\NEQ{\Not=}
\def\nat{\mathbb{N}}
\def\real{\mathbb{R}}
 \def\vp{\varphi}
 \def\mM{\mathcal{M}}
\def\mC{\mathcal{C}}
\def\INF{\mathit{INF}}
\def\Beet{\mathit{Between}}
\def\suc{\mathit{suc}}
\def\form{\mathit{Form}}
\def\TL{\mathit{TL}}
\def\TLs{\mathit{TL}(\until,\since,\suntil,\ssince)}
\newcommand     {\KMINUS}   {\textbf{\textsf {K}} ^{-}}
\newcommand     {\KPLUS}    {\textbf{\textsf {K}} ^{+}}
\newcommand{\until}{{\sf Until}}
\newcommand{\since}{{\sf Since}}
\newcommand{\suntil}{{\sf Until}^s}
\newcommand{\ssince}{{\sf Since}^s}
\newcommand{\G}{{\square}}
\newcommand{\Gb}{\overleftarrow{{\square}}}
\newcommand{\untg}{{\mbox{Until-gap}}}
\def\bgamma{\boldsymbol{\gamma}}
\def\andl           {\wedge}                
\def\AND            {\bigwedge}
\def\orl            {\vee}                  
\def\not            {\neg}
\def\E              {\exists}
\def\A              {\forall}
\def\true              {\textbf{True}}
\def\false              {\textbf{False}}
\def\cond              {\text{Cond}}
\def\In1{\  \text{In}_{(d,c)}}
\newcommand{\EA}    { \overrightarrow{\exists}  \forall}
\def\btw{\mathit{Between}}
\newcommand{\Alub}[3] {(\A {#1})_{> {#2}}^{< {#3}}}     
\newcommand{\Elub}[3] {(\E {#1})_{> {#2}}^{< {#3}}}     
\newcommand{\Part}[2]{ \texttt{Part} (\langle  {#1} \rangle ,{#2}  )}
\newcommand{\Parts}[3]{ \texttt{Part} (\langle  \delta_{#1}, \dots, \delta_{#2}  \rangle ,{#3}  )}
\newcommand{\sh}[1]{O_{sh( #1)}}
\numberwithin{subcase}{case}
\newcommand{\emp}[1]    {\textbf {\textit{#1}}}
\def\FOMLO      {First-Order Monadic Logic of Order}
\def\FOMLOb     {First-Order Monadic Logic of Order }
\def\FOMLOi     {\textit{FOMLO}}
\def\FOMLOiB    {\textit{FOMLO} }
\newcommand{\IFOMLO}{\mathit{FOMLO}}
\newcommand{\mTL}{\mathit{TL}}
\newcommand{\TLany}             {\mTL(B)}
\def\mI{\mathcal{I}}
\newcommand{\g}[1]      {\mathcal{#1}}
\newcommand     {\mdlty}    {{\sf M}}
\newcommand     {\mdltyB}   {{~\sf M ~}}
\newtheorem{xmpl}[thm]{Example}
\def\mE{\mathcal{E}}
\begin{document}
\title { A  Proof  of  Stavi's  Theorem}

\author{Alexander   Rabinovich}
  \address{ The Blavatnik  School    of Computer Science, Tel Aviv
  University,
          Tel Aviv 69978, Israel}
        \email{rabinoa@post.tau.ac.il}


\begin{abstract}
   Kamp's theorem established  the expressive equivalence of
the  temporal logic  with   Until and Since  and  the First-Order
Monadic Logic of Order (FOMLO) over the Dedekind-complete  time
flows. However, this temporal logic  is not expressively complete
for
 $\IFOMLO$ over the rationals.
 Stavi introduced two additional
         modalities
        and proved that the temporal logic with Until, Since  and  Stavi's  modalities
    is expressively equivalent to  $\IFOMLO$ over all linear orders.
      We present   a simple proof of Stavi's theorem.
\end{abstract}
 \maketitle

\section{Introduction}\label{sect:intro}


Temporal Logic ($\TL$) introduced  to Computer Science by Pnueli in
\cite{Pnu77}
 is a convenient framework for 
reasoning about ``reactive'' systems.  This has  made temporal
logics  a popular subject in the Computer Science community,
enjoying extensive research. 
In $\TL$ we describe basic system properties by {\it atomic
propositions\/} that hold at some points in time, but not at others.
More complex properties are conveyed by formulas built from the
atoms using Boolean connectives and {\it Modalities\/} (temporal
connectives):
A $k$-place modality $M$ transforms statements $\varphi_1,\dots,
\varphi_k$ possibly on `past' or `future' points to a statement
$M(\varphi_1,\dots,\varphi_k)$ on the `present' point $t_0$.
The rule to determine the truth of a statement
$M(\varphi_1,\dots,\varphi_k)$ at $t_0$ is called a {\it Truth
Table\/}. The choice of particular modalities with their truth
tables yields different temporal logics. A temporal logic with
modalities   $M_1,\dots, M_k$ is denoted by
 $\TL(M_1,\dots ,M_k)$.

The simplest example is the one place modality $\Diamond   P$
saying:
``$P$  holds some time in the future."  Its truth table is 
formalized by $\varphi_{{}_{\!\Diamond }} (t_0,P)\equiv (\exists
t>t_0) P(t)$. This is a formula  of the First-Order  Monadic Logic
of
Order (\FOMLOi) - 
a fundamental formalism in Mathematical Logic
  where formulas are built using atomic propositions $P(t)$,
   atomic relations between elements $t_1=t_2$, $t_1<t_2$, Boolean
connectives and  first-order quantifiers $\exists t$  and $\forall
t$. Most modalities used in the literature are defined by such
\FOMLOi \
truth tables, and as a result, every temporal formula 
translates directly into an equivalent  $\IFOMLO$ formula. Thus,
different temporal logics may be considered as a convenient way to
use fragments of $\IFOMLO$.  $\IFOMLO$  can also
serve as a yardstick by which  one is able to check the strength of temporal logics: 
A temporal logic is {\it expressively complete\/} for a fragment $L$
of $\IFOMLO$ if every formula of $L$ with a single free variable
$t_0$ is equivalent to a temporal formula.

Actually, the notion of expressive completeness   refers to a
temporal logic and to a model (or a class of models)
 since the
question whether two formulas are equivalent depends on the domain
over which they are evaluated.  Any  (partially) ordered set with
monadic predicates is a model for $\TL$ and \FOMLOi, but the main,
{\it canonical\/}, linear time intended models  are the non-negative
integers $\langle \nat,~<\rangle$  for discrete time and the
non-negative reals $\langle \real^{\geq 0},<\rangle$  for continuous time.

A major result concerning $\TL$ is Kamp's theorem \cite{Kam68},
which implies that the pair of modalities ``$\mbox{$P{~\it Until}
~Q$}$"\ { and}\ ``$\mbox{$P{~\it Since} ~Q$}$" is expressively
complete for $\IFOMLO$ over the above two linear time  canonical
models.

The temporal logic with the modalities \emph{Until} and \emph{Since}
is not expressively complete for
 $\IFOMLO$ over the rationals \cite{GHR94}.

 Stavi introduced two additional
         modalities $ \suntil$ and $ \ssince $ (see Sect. \ref{subsect:TL})
        and proved that $\TLs$  is expressively
complete for $\IFOMLO$ over all linear orders. There are only two
published  proofs of Stavi's theorem \cite{GHR93,GHR94}; however,
none    is simple.

The objective of this paper is to present  a simple proof  of
Stavi's theorem.

The rest of the paper is organized as follows: In Sect.
\ref{sect:prel} we recall the definitions of the monadic logic, the
temporal logics and state  Kamp's and Stavi's  theorems. In  Sect.
\ref{sect:part} we introduce  partition formulas which play an
important role in our proof of Stavi's theorem.
 In
Sect. \ref{sect:proof-stavi}  we prove  Stavi's theorem. The proof
of one  proposition is  postponed to Sect.
 \ref{sect:s-comp}. Sect.
\ref{sect:related}  comments  on  the   previous  proofs of Stavi's
theorem.

\section{Preliminaries}\label{sect:prel}
In this section
 we recall the definitions of   linear orders,  the first-order  monadic logic of order,
the temporal logics and state  Kamp's and Stavi's  theorems.

\subsection{Intervals and gaps in linear orders}

A subset $I$ of a linear order $( {T}, <)$ is an \emph{interval}, if
for all $t_1 < t < t_2$ with $t_1, t_2 \in I$ also $t \in I$.
 For intervals with
endpoints $a, b \in {T}$, whether open or closed on either end, we
will use the standard notation, such as $[a, b): = \{ t \in {T} \mid
a \leq t < b \}$,  $(a, b): = \{ t \in {T} \mid a < t < b \}$, etc.
For   $a\in {T}$  let $[a,\infty) := \{t \mid  t \geq  a \}$ and,
similarly,   $(-\infty, a)  := \{ t\in
 \mid t<a\}$.

  A  \emph{Dedekind cut} of a linearly ordered set
$( {T}, <)$
 is a downward closed non-empty set $C \subseteq {T}$ such that  its complement is non-empty and if $C$
has a least upper bound in $( {T}, <)$,  then it is contained in
$C$. A \emph{proper cut}  or a \emph{gap} is a cut that has no least
upper bound in $( {T}, <)$,  i.e., one that has no maximal element.

A linear order is \emph{Dedekind complete} if it has no gaps;
equivalently, if  for every non-empty subset $S$ of $T$, if $S$ has
a lower bound in $T$, then it has a greatest lower bound, written
$\inf(S)$, and if $S$ has an upper bound in $T$, then it has a least
upper bound, written $\sup(S)$.

For a gap $g$  and an element $t\in T$ we write $t<g$ (respectively,
$g<t$) if $t\in g$ (respectively, $t\notin g$). We also write
$(t,g)$ for the interval $\{a\in T\mid a>t\wedge a\in g\}$;
similarly, $(g,t)$  is  $\{a\in T\mid a<t\wedge a\notin g\}$.
Finally, for gaps $g_1$ and $g_2$ we write $g_1\leq g_2$ if
$g_1\subseteq g_2$, and the interval $(g_1,g_2)$ is defined as
$\{a\in T\mid a\notin g_1 \wedge a\in g_2\}$.

\subsection{First-order Monadic Logic and Temporal Logics} \label{sect:prelim}

We present the basic definitions of \FOMLOb (\FOMLOi) and {Temporal
Logic } ($\TL$), and   well-known results concerning their
expressive power. Fix a set $\Sigma$ of \emph{atoms}. We use $P, Q,
R,   \dots$ to denote members of $\Sigma$. The syntax and semantics
of both logics are defined below with respect to such a $\Sigma$.


\subsubsection{\FOMLO} \label{subsect:fomlo} \hfill



\emp{Syntax:} In the context of \FOMLOi~  the atoms of $\Sigma$ are
referred to (and used) as \emp{unary predicate symbols}. Formulas
are built using these symbols,  plus two binary relation symbols:
$<$
and $=$, and a  set of {first-order  variables} 
(denoted: $ x,y, z, \dots $).
Formulas are defined by the grammar:
$$atomic ::= ~~x < y ~~|~~ x = y ~~|~~P(x)~~~~~~~~~~ (\mbox{where} ~~ P \in \Sigma)$$
$$\varphi ::= ~~ atomic ~~|~~ \neg \varphi_1 ~~|~~ \varphi_1 \orl \varphi_2 ~~|~~ \varphi_1 \andl \varphi_2 ~~|~~ \exists x \varphi_1 ~~|~~ \forall x \varphi_1$$
The notation $\varphi(x_1, \dots, x_n)$ implies that $\varphi$ is a 
formula where the $x_i$  are the only variables occurring free;
writing $\varphi(x_1, \dots, x_n, P_1, \dots, P_k)$ additionally
implies that the $P_i$ are the only predicate symbols that occur in
$\varphi$. We will also use the standard abbreviated notation for
\emp{bounded quantifiers}, e.g.: $(\exists x)_{> z}(\dots)$ denotes
$\exists x ((x > z) \andl (\dots))$ and $(\forall x)^{<
z}_{>z_1}(\dots)$ denotes $\forall x ((z_1<x < z) \rightarrow
(\dots))$, etc.

\emp{Semantics}: 
Formulas are interpreted over \textit{labeled  linear orders} which are  called \emph{chains}. A $\Sigma$-\emp{chain}
 is a triplet $\mM = ( {T}, <, \mI)$ where
${T}$ is a set - the \emp{domain} of the chain, 
$<$ is a linear  order relation on ${T}$, 
and $\g{I} : {\Sigma} \rightarrow \g{P} ({T})$ is the
\emp{interpretation} of  $\Sigma$ (where $\g{P}$ is the powerset
notation). We use the standard notation $\mM, t_1, t_2, \dots t_n
\models \varphi(x_1, x_2, \dots x_n)$ to indicate that the formula $\varphi$ with free variables among $x_1,\dots,x_n$
is satisfiable in $\mM$ when $x_i$ are interpreted as elements $t_i$ of $\mM$.
For atomic $P(x)$ this is defined by: $\mM, t \models P(x)$ iff $t
\in \g{I}(P)$; The semantics of $<, =, \neg, \andl, \orl, \exists
\mbox { and } \forall$ is defined in a standard way.


\subsubsection{ Temporal Logics } \label{subsect:TL}\hfill


\emp{Syntax:} In the context of $\TL$  the atoms of $\Sigma$ are
used as \emph{atomic propositions} (also called \emph{propositional
atoms}). Formulas are built using these atoms  and a set (finite or
infinite) $B$ of
\emph{modality names}, 
where an integer \emph{arity}, denoted $|\mdlty|$, is associated with
each $\mdlty \in B$.
The syntax of  $\TL$ with the \emp{basis} 
$B$, denoted \emp{TL(B)}, 
is defined by the grammar:
$$F ::= ~~P ~~|~~ \neg F_1 ~~|~~ F_1 \orl F_2 ~~|~~ F_1 \andl F_2 ~~|~~ \mdlty(F_1,F_2,\dots,F_n),$$
where $P \in \Sigma$ 
and $\mdlty \in B$ an $n$-place modality 
(with arity $|\mdlty| = n$). As usual, \emp{True} denotes $P \orl
\neg P$ and \emp{False} denotes $P \andl \neg P$;
we will use infix notation for binary modalities, where $F_1 \mdltyB
F_2$ is an alternative notation for $\mdlty(F_1, F_2)$.

\emp{Semantics:} 
Formulas are interpreted at \emph{time-points} (or \emph{moments}) in
chains  (elements of the domain).
The domain ${T}$ of $\mM = ( {T}, <, \mI)$ is called the
\emph{time domain}, and $({T}, <)$ - the \emph{time flow} of the
chain.
The semantics of each $n$-place 
modality $\mdlty \in B$ is defined by
                a `rule' specifying how the set of moments
                where $\mdlty(F_1, \dots, F_n)$ holds (in a given structure) is determined by the $n$ sets of moments where each of
                the formulas $F_i$ holds. Such a `rule' for $\mdlty$ is formally specified by
                an operator $\g{O}_{\mdlty}$ on time flows, where
                given a time flow $\g{F} = ({T}, <)$,
                $\g{O}_{\mdlty} (\g{F}) $ is  an operator in $(\g{P}({T}))^n \longrightarrow \g{P}({T})$.

The semantics of $\TLany$ formulas is then defined inductively.
Given a chain  $\mM = ( {T}, <, \mI)$ and a moment $t \in \mM$,
 define when a 
 formula $F$ \emph{holds} in $\mM$ at $t$ - denoted $\mM, t \models F$:

\begin {itemize}
    \item   $\mM, t \models P$ iff $t \in \g{I}(P)$, for any propositional atom $P$.
    \item   $\mM, t \models F \orl G$ iff $\mM, t \models F$ or $\mM, t \models G$;
                similarly  for $\andl$ and $\neg$.
    \item $\mM, t \models \mdlty(F_1, \dots, F_n) \mbox { iff } t \in [\g{O}_{\mdlty} ({T}, <)](T_1, \dots, T_n)$
                where $\mdlty \in B$ is an $n$-place modality, $F_1, \dots, F_n$ are formulas and
                $T_i =_{def} \{ s \in {T} : \mM, s \models F_i \}$.
\end {itemize}


\emp{Truth tables:}
Practically most standard modalities studied in the literature can
be specified in \FOMLOi:
    A \FOMLOiB formula
                $\varphi(x, P_1, \dots, P_n)$ (with a single free first-order  variable $x$ and with $n$ predicate symbols $P_i$) is called
                an \emp{$n$-place  first-order  truth table}. 
                Such a truth table $\varphi$ {defines}
                an $n$-ary modality $\mdlty$ (whose semantics is given by an operator $\g{O}_{\mdlty}$) iff
                for any time flow $({T}, <)$, for any 
                $T_1, \dots, T_n \subseteq {T}$ 
                and for any structure $\mM = ({T}, <, \g{I})$ where $\g{I}(P_i) = T_i$:
                $$[\g{O}_M({T}, <)](T_1, \dots, T_n) = \{ t \in {T} : \mM, t \models \varphi(x, P_1, \dots, P_n) \}$$

\begin{xmpl} \label{ex:modalities}
Below are truth-table definitions for the (binary)
\emp{strict}-$\until$ and \emp{strict}-$\since$  and
the (unary) $\G $,  $\Gb$,   $\KPLUS$ and $\KMINUS$:

\begin {itemize}

%

        \item   $P~ \until~ Q$ is defined by
        : $\varphi_{_{\until}}(x, P, Q) := (\exists x') _{> x} (Q(x') \andl (\forall y) _{> x} ^{< x'}
        P(y))$.

        \item   $P~\since~Q $ is defined by:
                    $\varphi_{_{\since}}(x, P, Q): = (\exists x') ^{< x} (Q(x') \andl (\forall y) _{> x'} ^{< x}
                    P(y))$.
  \item $\G (P)$  (respectively, $\Gb(P) $) - ``$P$ holds everywhere after (respectively, before)  the
current moment": \begin{gather*}\varphi_{_\G}(x, P) :=(\forall x')_{> x} P(x')\\
\varphi_{_{\Gb}}(x, P) :=(\forall x')_{< x} P(x')
\end{gather*}
%
        \item   $\KPLUS$ defined by:
                    $\varphi_{_{\KPLUS}}(x, P) := (\forall x') _{> x} (\exists y) _{> x} ^{< x'}
                    P(y))$.

        \item   $\KMINUS$ defined by:
                    $\varphi_{_{\KMINUS}}(x, P) := (\forall x') ^{< x} (\exists y) _{> x'} ^{< x}
                    P(y))$.

\end {itemize}
\end{xmpl}
Formula $\KMINUS(P)$   holds at a moment $t $ iff $t=\sup(\{t' \mid
t'<t\wedge  P(t')\})$. Dually,    $\KPLUS(P)$  holds at $t$ iff
$t=\inf(\{t' \mid t'>t\wedge  P(t')\})$. Note that $\KPLUS(P)$ is
equivalent to
 $\neg((\neg P) \until \textbf{True})$  and   $\G  P$
   is  equivalent to $\neg( \true\until (\neg P))$.


Let $\bgamma^+$ be a unary modality such that  $\bgamma^+ (P)$ holds
at $t$ if there is a gap $g>t$ in the order such that
$g= \sup(\{t' \mid \Alub{y}{t}{t'} P(y)\})$.
 We say that $g$ is the gap
left  definable by $P$ that succeeds $t$, or just that \emph{$g$ is
$P$-gap that succeeds $t$};
$P$ holds
everywhere on the interval $(t,g)$, and for every  $t_1>g$,  there
is $t'\in(g,t_1)$ such that $t'\notin P$.

%
A natural formalization of  $\bgamma^+$  semantics  uses a
second-order quantifier - ``there is a gap''; however,
$\bgamma^+(P)$ is equivalent to the conjunction of the following
  formulas \cite{GHR94}:
 \begin{enumerate}
\item $(P\until P) \wedge \neg (P\until \neg P)$.
\item $  \neg \G  P$ - ``$\neg P$ holds somewhere in the future."
\item $\neg (P\until  (P\wedge  \KPLUS (\neg P)))$.
%
\end{enumerate}
Since  $\G$ and $\KPLUS$ are equivalent to $\TL(\until)$ formulas,
$\bgamma^+(P)$ can be considered as an abbreviation of a
$\TL(\until)$ formula, and   $\bgamma^+$ has a first-order truth
table
 $\varphi_{\bgamma^+}(x, P)$.

$\bgamma^-$ is the mirror image of $\bgamma^+$, i.e., going into the
past instead of into the future.
 $\bgamma^- (P)$ holds
at $t$ if there is a gap $g<t$ in the order such that $P$ holds
everywhere on the interval $(g,t)$, and for every  $t_1<g$,  there
is $t'\in(t_1,g)$ such that $t'\notin P$. We say that $g$ is the
(right  definable) $P$-gap that precedes $t$, or just that \emph{$g$ is
$P$-gap that  precedes  $t$.}

The modalities $\suntil$ and $\ssince$ were introduced by Stavi.
$P\suntil Q$ holds at $t$ if there is a gap $g>t$ such that:
\begin{itemize}
\item $P$ is true on $(t,g)$.
\item In the future of the gap, $P$ is false arbitrarily close to the
gap, and
\item $Q$ is true from $g$  into the future for some uninterrupted
stretch of time.
\end{itemize}

 $\suntil$ has a first-order truth table
$\varphi_{\suntil}(x, P,Q)$ which is the  conjunction of the
following
  formulas: \begin{enumerate}

\item $\varphi_{\bgamma^+}(x, P)$.

\item $(\exists x_1)_{> x} \big(\neg P(x_1)
\wedge     (\forall y) _{> x } ^{< x_1}[ (\neg P(y))\rightarrow
\Alub{z}{y}{x_1} Q(z) ]\big)$.

\end{enumerate}
$\ssince$ is the mirror image of $\suntil$.


\subsection{Kamp's and Stavi's  Theorems} \label{subsect:exComp}


We are interested in the relative expressive power of $\TL$
(compared to \FOMLOi) over the class of \emp{linear structures},
where the time flow is an irreflexive linear order.

\emp{Equivalence} 
between temporal and monadic formulas is naturally defined: $F$ is
equivalent to $\varphi(x)$  over a class $\mC$ of structures iff for
any $\mM\in \mC$   and $t \in \mM$: $\mM, t \models F
\Leftrightarrow \mM, t \models \varphi(x)$. If $\mC$ is the  class of
all chains, we will say that $F$ is equivalent to $\varphi$.


\emp{Expressive completeness/equivalence}: A temporal language
$\TLany$ 
 is \emph{expressively complete } for
 \FOMLOiB over a class $\g{C}$ of structures iff for every
$\FOMLOiB $ formula  $\varphi(z)$    with  one free variable  there is a $\psi\in \TLany $ such that $\varphi$ is equivalent to $\psi$ over $\g{C}$.
 Similarly, one may speak of {expressive
completeness } of \FOMLOiB
for some temporal language. 
If we have {expressive completeness } in both directions between two
languages, then they are {expressively equivalent}.

If every modality in $B$ has a \FOMLOiB truth-table, then it is easy to translate every formula of  $\TLany$  to an equivalent \FOMLOiB  formula.
Hence, in this case  \FOMLOiB is expressively  complete for $\TLany$.

%
The fundamental theorem  of Kamp's 
states:

\begin{thm}[\cite{Kam68}] \label{th:kamp}
     $\TL(\until,\since)$ is expressively equivalent to \FOMLOiB over Dedekind complete chains.
\end{thm}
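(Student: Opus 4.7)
For the easy direction ($\TL(\until,\since) \subseteq \IFOMLO$), I would simply induct on the structure of a $\TL(\until,\since)$ formula, using the first-order truth tables of $\until$ and $\since$ recalled in Example~\ref{ex:modalities}; this direction works over any chain, Dedekind complete or not.

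For the nontrivial direction, my plan is to argue by induction on the quantifier rank $n$ of the $\IFOMLO$ formula $\vp(z)$. Fix a finite signature $\Sigma$. I would show that for every Dedekind-complete $\Sigma$-chain $\mM$ and every point $t\in\mM$, the Ehrenfeucht--Fra\"iss\'e $n$-type of $(\mM,t)$ is definable at $t$ by some $\TL(\until,\since)$ formula $\tau^n_{(\mM,t)}$. Since for a fixed $\Sigma$ there are only finitely many such $n$-types, and $\vp(z)$ is equivalent (up to its quantifier rank $n$) to a finite Boolean combination of $n$-type formulas, this yields the required $\TL(\until,\since)$ translation. The base case $n=0$ is trivial: the $0$-type at $t$ is determined by which atoms of $\Sigma$ hold there. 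For the inductive step, assume that for each $n$-type $\sigma$ there is a formula $F_\sigma\in\TL(\until,\since)$ defining the set of points of type $\sigma$. A composition lemma then tells us that the $(n{+}1)$-type of $(\mM,t)$ is determined by the atoms true at $t$ together with the pattern of $n$-types occurring immediately to the right and immediately to the left of $t$; this pattern should be describable using nested $F_\sigma\until F_\rho$ expressions in the future and $F_\sigma\since F_\rho$ expressions in the past.

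The hard part will be formalising ``the pattern of $n$-types immediately to the right of $t$'' and exhibiting a finite set of $\TL(\until,\since)$ templates that captures every such pattern. Here Dedekind-completeness is essential in two ways: (i) if the $n$-type changes at some first future point, that point belongs to $\mM$ and is reached by an $\until$ formula; (ii) if the $n$-types accumulate rather than admitting a successor change, the supremum of the change points again lies in $\mM$ and can be named using the $\KPLUS$-style idiom $\neg((\neg F)\until\true)$, which is itself a $\TL(\until)$ formula. Without Dedekind-completeness, a region of fixed $n$-type may terminate at a gap that $\until$ cannot see past --- this is precisely why Kamp's theorem fails over the rationals and why Stavi's additional modalities $\suntil,\ssince$ become necessary in the general case. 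A case analysis on the three possibilities (successor change, right-accumulation of changes, cofinal constancy) together with their mirror images on the past side, followed by a symmetric use of $\since$, produces the finitely many alternatives whose disjunction defines $\tau^{n+1}_{(\mM,t)}$.
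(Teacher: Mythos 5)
First, a remark on the comparison itself: the paper does not prove Theorem~\ref{th:kamp} at all --- it is imported from \cite{Kam68} as background, and the machinery developed here (partition formulas, Proposition~\ref{lem:sneg}) is aimed at Stavi's theorem. The closest thing to ``the paper's proof'' is the remark in Sect.~\ref{sect:related} that restricting simple partition formulas to the $\EA$-formulas of \cite{Rab14} yields a proof of Kamp's theorem by a purely syntactic induction on $\IFOMLO$ formulas: one shows that a $\TL$-translatable class of formulas is closed under disjunction, existential quantification and (the hard step) negation. Your plan is a genuinely different, semantic route via Ehrenfeucht--Fra\"{\i}ss\'e types and composition, closer in spirit to the game-based proof of \cite{GHR93}.

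As written, however, the plan has a real gap at exactly the point you flag as ``the hard part.'' By the composition theorem for ordered sums, the $(n{+}1)$-type of $(\mM,t)$ is determined by the labels at $t$ together with the $(n{+}1)$-theories of the entire segments $(-\infty,t)$ and $(t,\infty)$ --- not by ``the pattern of $n$-types immediately to the right and immediately to the left of $t$.'' The colouring of $(t,\infty)$ by $n$-types is an arbitrary finitely-coloured linear order, and what must be shown is that each $(n{+}1)$-equivalence class of such coloured orders is, as a property of $t$, captured by a single $\TL(\until,\since)$ formula of bounded size. Your three-case analysis (successor change, accumulation of changes, cofinal constancy) only describes the first homogeneous block of $n$-types adjacent to $t$; it does not address the global arrangement of blocks (for instance, properties such as ``between any two future points of type $\sigma$ there is a point of type $\rho$''), and iterating the case analysis does not obviously terminate in a formula of fixed $\until$/$\since$ nesting depth. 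This is not a routine verification: it is the entire content of Kamp's theorem, and every known proof (separation, games, or the $\EA$/partition-formula induction) spends most of its effort precisely here. The proposal should therefore be regarded as a correct statement of a known strategy with its central lemma left unproved, rather than as a proof.
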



 $\TL(\until,\since)$  is not expressively complete for
 $\IFOMLO$ over the rationals \cite{GHR94}.
 Stavi  introduced two new
         modalities $ \suntil$ and $ \ssince $ (see Sect. \ref{subsect:TL})
        and proved:
        \begin{thm}
          $\TL( \until, \since, \suntil, \ssince )$  is expressively equivalent to \FOMLOiB over all  chains.
               \end{thm}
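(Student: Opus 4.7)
The plan is to prove the two expressive-completeness directions between $\TL(\until,\since,\suntil,\ssince)$ and $\FOMLOi$ separately. The direction from temporal to monadic is immediate: each of the four modalities has been equipped with a first-order truth table (Sect.\ \ref{subsect:TL}), so every $\TL(\until,\since,\suntil,\ssince)$ formula translates inductively to an equivalent $\FOMLOi$ formula. The substantive content lies in the converse direction, which I would approach by reducing the arbitrary-chain case to the Dedekind-complete case handled by Kamp's theorem (Thm.\ \ref{th:kamp}).

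Concretely, given a chain $\mM = (T,<,\mI)$ and a $\FOMLOi$ formula $\varphi(z)$ with atoms $P_1,\dots,P_n$, I would pass to the Dedekind completion $\hat T$ of $T$ and extend the interpretation of each $P_i$ to $\hat T$ by deciding which gaps of $T$ should satisfy $P_i$. The extension cannot be arbitrary: we must arrange that the value at a gap $g$ is determined by $\TL(\until,\since,\suntil,\ssince)$-definable information about $P_i$ on the $T$-side of $g$. This is exactly the role of the partition formulas of Sect.\ \ref{sect:part} — they furnish a finite catalogue of "local types" of behaviour on either side of a gap, each type expressible in the extended temporal logic via $\bgamma^+$, $\bgamma^-$ and Boolean combinations of atoms. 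With the extended chain $\hat\mM$ so specified, Kamp's theorem supplies a $\TL(\until,\since)$-formula $\psi$ over $\hat\mM$ equivalent to $\varphi(z)$.

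The crux of the argument is then to translate $\psi$ back to a $\TL(\until,\since,\suntil,\ssince)$-formula $\psi'$ over $\mM$ such that for every $t \in T$, $\mM, t \models \psi'$ iff $\hat\mM, t \models \psi$. I would do this by induction on the structure of $\psi$, the nontrivial cases being $F_1 \until F_2$ and $F_1 \since F_2$. Each such subformula may be witnessed in $\hat\mM$ either at a genuine moment of $T$, in which case the ordinary $\until$ or $\since$ in $\mM$ already suffices, or at a gap of $T$, in which case Stavi's modalities are tailor-made: $\suntil$ (dually $\ssince$) names a $P$-gap that succeeds the current moment together with the required cofinal continuation of the target formula on the far side. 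The main obstacle is showing that this case analysis can always be carried out inside the extended temporal logic — in particular, that the clause "there is a gap $g > t$ at which $F_1$ holds cofinally from the left and $F_2$ holds cofinally from the right" is itself uniformly expressible via partition formulas combined with $\suntil$ and $\ssince$. This is precisely the reduction whose proof is postponed to Sect.\ \ref{sect:s-comp}; once it is in hand, the translation of $\psi$ to $\psi'$, and hence expressive completeness over all chains, follows.
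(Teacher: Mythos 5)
Your first paragraph (temporal to monadic) matches the paper. For the converse you take the completion route, which is essentially the strategy of the two previously published proofs and is emphatically not the one this paper follows; and as sketched it has genuine gaps. First, after extending the predicates to the Dedekind completion $\hat{\mM}$, Kamp's theorem yields a $\TL(\until,\since)$ formula equivalent to $\varphi(z)$ \emph{as interpreted over} $\hat{\mM}$, whose quantifiers now range over the added gap-points as well; to preserve the meaning of $\varphi$ you must first relativize all quantifiers to the original domain, which forces a fresh predicate marking the new points and changes the formula to which Kamp's theorem is applied. Second, and more seriously, the back-translation does not go through as stated: a witness for $F_1\until F_2$ lying at a gap $g$ is captured by $\suntil$ only when $g$ is the $F_1$-gap succeeding the current moment, i.e.\ when $F_1$ fails cofinally just beyond $g$; an arbitrary gap at which $F_2$ has been declared true need not be left-definable by $F_1$, or definable at all, and then no $\TLs$ formula evaluated at points of $T$ can recover the truth of $F_2$ there. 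The published completion proofs cope with this by completing only with respect to gaps definable by $\TLs$ formulas of bounded nesting depth and by carrying intricate inductive invariants relating $\mM$ to its partial completions --- precisely the complications this paper is written to avoid. Finally, you attribute the missing reduction to Sect.~\ref{sect:s-comp}, but that section proves something else entirely, namely that the negation of a simple partition formula is again simple (Proposition~\ref{lem:sneg}).

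The paper's own argument never forms a completion and never invokes Kamp's theorem. It works in the canonical $\TLs$-expansion of $\mM$ (all $\TLs$-definable predicates added as atoms) and shows by structural induction that every \FOMLOiB formula is equivalent to a \emph{simple partition formula} (Proposition~\ref{prop:fo2ea}); simple formulas are closed under disjunction, conjunction and existential quantification outright, the whole difficulty being concentrated in closure under negation, and simple formulas with one free variable translate directly into $\TLs$ (Proposition~\ref{prop:forms}). To salvage your outline you would essentially have to reconstruct the separation- or game-based arguments of Gabbay, Hodkinson and Reynolds; the partition-formula route is the shorter path.
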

As $\until,~\since$ and  Stavi's  modalities are definable in
\FOMLOi, it follows that \FOMLOiB is {expressively complete } for
$\TLs$. The contribution of our paper is  a proof  that $\TLs$ is
{expressively complete } for \FOMLOi.
 Our proof is constructive. An algorithm which for
every $\IFOMLO$ formula $\varphi(x )$ constructs a $\TLs$ formula
which is equivalent to $\varphi$
 is  easily extracted from our proof.



\section{Partition Formulas}\label{sect:part}
In this section we introduce partition formulas and state their
properties. They will play an important role in our proof of Stavi's
theorem.
 The basic partition formulas generalize   the
Decomposition formulas of \cite{GPSS80}.

\begin{defi}[Partition expressions]  Let $\Sigma$ be a set of monadic predicate
names, and $\delta_1(x), \dots, \delta_n(x)$  are quantifier free
first-order formulas over $\Sigma$ with one free variable, and
$O\subseteq \{1,\dots,n\}$.  An expression  $\Part{\delta_1, \dots,
\delta_n}{O}$ is called a partition expression over $\Sigma$.
\end{defi}
\paragraph{{\textbf{Semantics}.}}
Let $I$ be an interval of a $\Sigma$-chain $\mM$. A partition
expression $\Part{\delta_1, \dots, \delta_n}{O}$ holds on $I$ in
$\mM$ (notation $\mM, I\models \Part{ \delta_1, \dots ,
\delta_n}{O}$
   if
$I$ can be partitioned into $n$ non-empty intervals $I_1,\dots ,I_n$
such that $\delta_j$ holds on all points in $I_j$, and $I_i$
precedes $ I_j$ for $i<j$, and $I_j$   a one-point interval for
$j\in O$. Note that we do not require that if $I_j$ is a one-point
interval, then $j\in O$. Observe that the semantics of partition
expressions does not depend on the names of the  variables  that
appear in $\delta_i$.

For example,  $\Part{P_1(x), P_2(x)\vee P_3(x)}{\{1,2\}}$ holds over
$I$ iff $I$ is a two point interval and $P_1 $ holds over its first
point and $P_2 $ or $P_3$ holds over its second point.
$\Part{\true, \true}{\{1\}}$ holds over $I$ iff $I$ has a minimal
point and at least two points.

\begin{defi}[Partition Formulas]  Let $\Sigma$ be a set of monadic predicate names.
\begin{description}
\item[Basic Partition Formulas]
A basic partition formula (over $\Sigma$) is an expression of one of
the following forms:
\begin{enumerate}
\item    $z=y$ or $z<y$
\item
 $\Parts{1}{n}{O} [y,z]$

\item
$\Parts{1}{n}{O} [z,\infty)$ or $\Parts{1}{n}{O} (-\infty,z]$ or \\
$\Parts{1}{n}{O} (-\infty,\infty)$,
\end{enumerate}
where $\Parts{1}{n}{O} $ are partition  expressions.
\item[Partition Formulas] are constructed from the  basic partition
formulas     by Boolean connectives and existential quantifier.
\item[Simple Partition Formulas] are  constructed  from the  basic partition
formulas     by conjunction and disjunction.

\item[Normal Partition Formulas]
    A Normal Partition Formula  
      is a partition formula   of the form:
\begin{align*} \displaystyle
E(z_1,\dots, z_m)   :=
   &
                          \left(\bigwedge_{k=n+1}^m z_k=z_{i_k}
                        \right) \wedge
                          \left(z_1 < z_2<  \dots <z_n \right) \\
                         &
                        \andl~\AND _{j=2}^{n} W_j  [z_{j-1}, z_{j} ]  \\
    & \andl~   W_{n+1} [z_n,\infty)     \wedge
                         W_1( -\infty,z_1]   \\
                          & \andl~   W_0  (-\infty,\infty)                                                                                                                                                       \\
\end{align*}
where $W_j$ are basic partition formulas, $n\leq m$  and
$i_{n+1},\dots, i_m\in\{1,\dots, n\}$.
\end{description}
\end{defi}
The  semantics of the partition formulas  will  not depend on the
names of   variables  that occur  in    partition expressions. These
occurrences of the variables are considered to be bound. For other
occurrences of variables the definition whether occurrences are free
or bound is standard.

\paragraph{{\textbf{Semantics}.}}  Partition formulas are interpreted over $\Sigma$-{chains}.
Let   $\mM = ( {T}, <, \mI)$ be a $\Sigma$-{chain}.
 We use the standard notation $\mM, t_1, t_2, \dots t_n
\models \varphi(x_1, x_2, \dots x_n)$ to indicate that the formula
$\varphi$ with free variables among $x_1,\dots,x_n$ is satisfiable
in $\mM$ when $x_i$ are interpreted as elements $t_i$ of $\mM$. For
basic partition formulas  this is defined by: $\mM, t \models
 \Parts{1}{n}{O} [x,\infty)$    iff  the partition expression
$\Parts{1}{n}{O}$ holds on the interval $[t,\infty)$ in $\mM$;
similarly, $\mM, t \models
 \Parts{1}{n}{O} (-\infty, x]$  (respectively, $\mM  \models
 \Parts{1}{n}{O} (-\infty ,\infty))$ iff $\Parts{1}{n}{O}$ holds on the interval $(-\infty,t] $
 (respectively, the interval  $ (-\infty ,\infty)$)  in
 $\mM$; and $\mM, t_1,t_2 \models
 \Parts{1}{n}{O} [x_1,x_2 ]$ iff $\Parts{1}{n}{O}$ holds on the interval $[t_1,t_2]
 $ in $\mM$; the semantics of $<, =, \neg, \andl, \orl,
 \mbox { and } \exists$ is defined in a standard way.

The following lemmas immediately follow  from the definitions and standard logical equivalences.
\begin{lem}\label{lem:triv1}
\begin{enumerate}

\item  Every simple formula is equivalent to a disjunction of normal formulas.

\item For every normal  formula $\vp$, 
 the formula  $\exists x \vp$ is equivalent to a disjunction of  normal formulas.
\item For every simple formula $\vp$,
 the formula  $\exists x \vp$ is equivalent to a simple formula.
\end{enumerate}
\end{lem}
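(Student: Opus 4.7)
The plan is to establish the three clauses in order. Clause (3) will be immediate from the other two: given a simple $\varphi$, clause (1) rewrites it as $\bigvee_i \psi_i$ with each $\psi_i$ normal, and then $\exists x\,\varphi \equiv \bigvee_i \exists x\,\psi_i$; clause (2) expresses each disjunct as a disjunction of normal formulas, so the whole result is in particular a simple formula.

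For clause (1), I first distribute $\wedge$ over $\vee$ to rewrite a simple formula as a disjunction of conjunctions of basic partition formulas. For each such conjunction on free variables $y_1,\ldots,y_m$, I case split on the order type of those variables: for each consistent pattern of equalities and strict inequalities, I conjoin a witness of the form $z_1<\cdots<z_n$ together with equations $z_k=z_{i_k}$ for the duplicated occurrences. This leaves me with basic partition formulas whose intervals are either of the canonical kind already present in a normal formula or of the form $[z_i,z_j]$ with $j>i+1$. Each such long interval is rewritten by splitting its partition expression at every intermediate $z_k$, yielding a disjunction of partition expressions on the canonical pieces. Whenever two partition expressions remain on the same canonical interval, I replace them by a single one giving their common refinement, again at the cost of a disjunction. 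Distributing these disjunctions outward yields the desired disjunction of normal formulas.

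For clause (2), let $E(z_1,\ldots,z_m)$ be normal and suppose $x$ is the quantified variable, which must coincide with some $z_j$. If $j>n$, then $E$ contains the equation $z_j=z_{i_j}$; dropping both it and the free occurrences of $x$ yields another normal formula. If $j\le n$ and some $z_k$ with $k>n$ is already forced equal to $z_j$, I rename so that $z_k$ takes over position $j$ and reduce to the previous case. Otherwise, $z_j$ is genuinely eliminated, and I merge the adjacent basic partition formulas $W_j[z_{j-1},z_j]$ and $W_{j+1}[z_j,z_{j+1}]$ into a single basic partition formula on $[z_{j-1},z_{j+1}]$, with the obvious adjustments when $j=1$ or $j=n$ involving the unbounded pieces. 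The merge concatenates the two sequences of block-predicates, with a case distinction over whether the last block on the left and the first block on the right at $z_j$ are one-point blocks, and, when both are singletons forced at $z_j$, fusing them into a single block whose predicate is their conjunction.

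The main obstacle is combinatorial rather than conceptual. Two primitive operations on partition expressions must be defined and shown correct: splitting $\Part{\delta_1,\ldots,\delta_n}{O}$ at an interior point, and concatenating two expressions whose intervals share an endpoint. Each requires a careful case analysis over which block of the partition contains the shared point and whether that block is constrained by $O$ to be a singleton, together with matching bookkeeping for the index sets and for the extreme intervals $(-\infty,z_1]$, $[z_n,\infty)$, and $(-\infty,\infty)$. Once these two operations are written out, the three clauses follow by routine induction on formula structure.
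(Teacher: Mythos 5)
Your overall route (distribute to a disjunction of conjunctions of basic partition formulas, case-split on the order type of the free variables, split long intervals at the intermediate variables, refine coincident expressions, and derive clause (3) from (1) and (2)) is the natural one, and the paper itself supplies no proof to compare against. Clause (1) and the splitting operation are sound: splitting $\Part{\delta_1,\dots,\delta_n}{O}[a,c]$ at an \emph{existing} point $b$ is an exact equivalence, because the block containing $b$ simply contributes a nonempty piece to each side. The problem is your merge operation in clause (2). In the case where neither the last block on the left nor the first block on the right is a singleton, you prescribe plain concatenation of the two predicate sequences, and that is not an equivalence. Take the chain $\Q$ with $z_{j-1}=0$, $z_{j+1}=1$, $W_j=\Part{P(x)}{\emptyset}$ and $W_{j+1}=\Part{\neg P(x)}{\emptyset}$. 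Then $\exists z_j\,\bigl(0<z_j<1\wedge \Part{P(x)}{\emptyset}[0,z_j]\wedge\Part{\neg P(x)}{\emptyset}[z_j,1]\bigr)$ is unsatisfiable, since the witness $z_j$ would have to satisfy both $P$ and $\neg P$; but the concatenation $\Part{P(x),\neg P(x)}{\emptyset}[0,1]$ is satisfied whenever $P$ holds exactly on an initial segment of $[0,1]$ --- including one determined by a gap, where no junction point exists at all.

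The missing idea is that the existentially quantified junction point must be materialized in the merged expression as a block forced by the set $O$ to be a singleton, carrying the conjunction of the two boundary predicates: in the (non-singleton, non-singleton) case the correct disjunct is $\Part{\dots,\delta_n,\delta_n\wedge\delta'_1,\delta'_1,\dots}{O''}$ with the middle position placed in $O''$, and an analogous forced singleton with the conjoined predicate is needed in the two mixed cases (only the both-singletons case reduces to your fusion). This is precisely what the singleton sets $O$ in partition expressions exist for, and it is the point where general linear orders differ from Dedekind-complete ones: without the forced singleton, the concatenated expression can be witnessed by a partition whose junction is a gap, so the backward implication to $\exists x\,\varphi$ fails. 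Once the merge is corrected this way (and its output stated as a disjunction of basic partition formulas rather than a single one), the remainder of your argument, including the reduction of clause (3) to clauses (1) and (2), goes through.
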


 \begin{lem}[Closure properties]\label{lem:closure1}
The set of simple formulas is (semantically)  closed under
disjunction, conjunction, and existential quantifier.
\end{lem}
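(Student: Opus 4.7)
My plan is to address the three closure properties separately. Closure under disjunction and conjunction is essentially definitional: by the very definition of \emph{simple partition formulas}, the class is the smallest one containing the basic partition formulas and closed under $\wedge$ and $\vee$. So given simple $\varphi_1, \varphi_2$, I would just observe that $\varphi_1 \vee \varphi_2$ and $\varphi_1 \wedge \varphi_2$ are simple by one more application of the formation rule, with no semantic argument required.

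For closure under the existential quantifier I would invoke Lemma \ref{lem:triv1}(3), which is exactly the statement needed. To unpack how this follows from parts (1) and (2) of that lemma, I would proceed as follows. First, given a simple $\varphi$, apply part (1) to rewrite it equivalently as a disjunction $\bigvee_k \psi_k$ of normal partition formulas. Next, distribute $\exists x$ over the disjunction to obtain $\exists x\, \varphi \equiv \bigvee_k \exists x\, \psi_k$ by a standard logical equivalence. Then apply part (2) to each $\exists x\, \psi_k$, producing a disjunction of normal formulas. Finally, note that every normal partition formula is syntactically a conjunction of basic partition formulas and equalities, hence already simple; so the resulting disjunction of normal formulas is simple by the closure under $\vee$ and $\wedge$ established in the first step.

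There is essentially no intrinsic difficulty in this lemma; the substantive work has been packaged into Lemma \ref{lem:triv1}(2), namely the elimination of $\exists x$ from a normal partition formula. The only thing to watch in the present argument is the bookkeeping: checking that when one passes from a simple formula through its normal form and back, one stays within the simple fragment. With Lemma \ref{lem:triv1} granted, the Closure Properties lemma is an immediate corollary.
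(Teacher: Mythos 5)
Your proof is correct and follows exactly the route the paper intends: the paper states this lemma without proof, treating closure under $\wedge$ and $\vee$ as definitional and closure under $\exists$ as an immediate consequence of Lemma~\ref{lem:triv1}(3), which is precisely your argument. Your unpacking of how part (3) follows from parts (1) and (2) -- including the observation that a normal partition formula is itself a conjunction of basic partition formulas and hence simple -- is accurate and fills in the bookkeeping the paper leaves implicit.
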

The set of simple  formulas is not closed under negation.
However, we show   later (see Proposition \ref{lem:sneg}) that the
negation of a  simple   formula is equivalent to a simple  formula
in the expansion of the  chains by all
 $\TLs$ definable predicates.

In the rest of this  section we explain  how to translate   a simple
partition formula
  with one free variable  into an  equivalent
 $\TLs$ formula.

Let $\delta_1, \dots , \delta_k$ be quantifier free first-order
formulas with
 one free variable and  $O\subseteq \{1,\dots ,k\}$. For $i=1,\dots k$,
let $D_i$ be a  temporal formula equivalent to $\delta_i$. Define:
\begin{equation}\label{eq:tr-1}
F_k:=D_k
\end{equation}
\begin{equation} \label{eq:tr-2}
 F_{i-1} := D_{i-1}\wedge
 \left\{ \begin{array}{ll}
         \false \until F_i  & \mbox{if $i-1\in O$ and $i\in O$};\\
        D_{i} \until F_i   & \mbox{if $i-1\in O$ and $i\notin O$};\\
        D_{i-1} \until F_i   & \mbox{if $i-1\notin O$ and $i\in O$};\\
D_{i-1} \until^* F_i   & \mbox{if $i-1\notin O$ and $i\notin O$},\\
        \end{array} \right.
        \end{equation}
where $ P\until^* Q$  holds at $t$ if there is $ t'>t$  such that
$\Part{P(x), Q(x)}{\emptyset}  $ holds on the interval
  $[t,t'] $; $ P\until^* Q$ can be expressed as
  disjunction of the following formulas:
  \begin{itemize}
  \item $P\wedge \big((P \until Q)\vee (Q\until Q)\vee  P\until \big (P\wedge  (Q\until Q) \big) \big)$
\item $P\wedge (P \suntil Q)$
\end{itemize}
\begin{lem} \label{lem:def-fi}
\leavevmode
\begin{enumerate}
\item Assume that there is  $t $   and a   partition of $[t_1,t]$ into non-empty
intervals  $I_1,\dots, I_k$ such that $\delta_j$ holds on $I_j$ and
$I_i$ precedes $ I_j$ for $i<j$, and $I_i$ is a one-point interval
for $i\in O$.
  Then $F_{k-j}$ holds on $I_{k-j}$.

\item if $F_{k-j}$ holds at $t_{k-j}$ then there is  $t \geq t_{k-j}$ such
that  $ \Part{\delta_{k-j}, \dots , \delta_k}{O_{k-j}}$ holds on
$[t_{k-j} ,t ]$, where $l\in O_{k-j}$ iff $l+k-1-j\in O$.
\item
$F_1$ holds at $t_1$ iff there is  $t \geq t_1$ such that $
\Part{\delta_1, \dots , \delta_k}{O}$ holds on  $[t_1,t ]$.
\end{enumerate}
\end{lem}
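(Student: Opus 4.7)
The plan is to prove parts (1) and (2) simultaneously by induction on $j \geq 0$, and to deduce part (3) as the specialization $j = k-1$, under which the reindexing in part (2) collapses to $O_{k-j} = O$ and $t_{k-j} = t_1$. The base case $j = 0$ is trivial: $F_k \equiv D_k \equiv \delta_k$, so $F_k$ holds at a point iff $\delta_k$ does, matching the one-block partition $[t_k, t_k]$.

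For the inductive step of part (1), let $i = k-j+1$; by the induction hypothesis, $F_i$ holds at every point of $I_i$, and I verify $F_{i-1} = D_{i-1} \wedge (\mathrm{op})(F_i)$ at every point of $I_{i-1}$. The $D_{i-1}$ conjunct is immediate since $\delta_{i-1}$ holds throughout $I_{i-1}$. The Until-type conjunct is handled by case analysis on whether $i-1, i \in O$: if both lie in $O$, the intervals $I_{i-1} = \{a\}$, $I_i = \{b\}$ are singletons and $b$ is the immediate successor of $a$ (otherwise the partition would fail to cover $[t_1,t]$), so $\false \until F_i$ holds at $a$; if only $i-1 \in O$ (resp.\ only $i \in O$), the open interval between $a \in I_{i-1}$ and the witness $b \in I_i$ lies in $I_i$ (resp.\ $I_{i-1}$), on which $D_i$ (resp.\ $D_{i-1}$) holds, giving the appropriate Until; if neither lies in $O$, both intervals are multipoint, and for any $s \in I_{i-1}$ and $b \in I_i$ the partition of $[s,b]$ into $[s,b] \cap I_{i-1}$ and $[s,b] \cap I_i$ witnesses $D_{i-1} \until^* F_i$ at $s$.

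The inductive step of part (2) is symmetric: given $F_{i-1}(t_{i-1})$, decompose into $D_{i-1}(t_{i-1})$ (yielding $\delta_{i-1}(t_{i-1})$) and the Until-type conjunct, which supplies either an immediate successor $t_i$ with $F_i(t_i)$ (case (i)), a point $t_i$ with $\delta_i$ on $(t_{i-1}, t_i)$ and $F_i(t_i)$ (case (ii)), a point $t_i$ with $\delta_{i-1}$ on $(t_{i-1}, t_i)$ and $F_i(t_i)$ (case (iii)), or in case (iv) a witness $t'$ together with a two-block interval partition $[t_{i-1}, t'] = B_1 \cup B_2$ with $D_{i-1}$ on $B_1$ and $F_i$ everywhere on $B_2$. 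In each case I apply the IH at an appropriate point---$t_i$ in (i)--(iii), and the right endpoint $t'$ of $B_2$ in (iv)---to obtain a partition of $[t_i, t]$ (respectively $[t', t]$), which I prepend with $I_{i-1}$ chosen as $\{t_{i-1}\}$, extending $I_i$ on the left by $(t_{i-1}, t_i)$ in case (ii); as $[t_{i-1}, t_i)$ in case (iii); or as $B_1$ with $B_2$ absorbed into $I_i$ in case (iv). Part (3) is then the instance $j = k-1$ of the combined statement.

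The main obstacle is case (iv), where I must verify that the four-disjunct unfolding of $\until^*$ given in the text actually captures every way a closed interval $[s,b]$ can split into two non-empty subintervals with prescribed labellings. The three subdisjuncts of the first clause handle, respectively, the subcases where the second block has a minimum, where the first block is a singleton, and where the first block has a maximum (with no minimum in the second); the disjunct $P \wedge (P \suntil Q)$ handles the remaining subcase, where the first block has no maximum and the second no minimum, so that they are separated by a Dedekind gap---precisely the situation for which Stavi's modality $\suntil$ was designed. Once this correspondence is verified, the remaining bookkeeping on interval endpoints in both directions of the induction is routine.
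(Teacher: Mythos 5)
Your proof is correct and takes the same route as the paper, which states only that (1) and (2) follow by induction on $j$ and (3) from these, leaving every detail to the reader; your case analysis on the four clauses of the definition of $F_{i-1}$, and the interval surgery in both directions, fills this in accurately. One tangential imprecision: in your unfolding of $\until^*$ (which is not actually needed for this lemma, only for the translation into $\TLs$), the "two blocks separated by a gap" subcase is covered by $P \wedge (P~\suntil~Q)$ only when $\neg P$ occurs arbitrarily close after the gap, as required by the truth table of $\suntil$; if $P$ persists on an interval beyond the gap, that disjunct fails and the witness is instead supplied by $P~\until~Q$.
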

\begin{proof}
(1) and (2)  by induction on $j$.
(3) immediately from (1) and (2).
\end{proof}
Let $\delta'_k$ be a quantifier-free first-order formula  with
 one free variable and $D'_k$ be a temporal formula equivalent to
$\delta'_k$.
 If $k\notin   O$ and we set $D_k:=D'_k\wedge \G
D'_k$ in equation (\ref{eq:tr-1}), then $F_1$ holds at $t_1$  iff
$\Part{\delta_1, \dots ,\delta_{k-1} , \delta'_k}{O}$ holds on
$[t_1,\infty)$; if $k \in  O$ and we set $D_k:=D'_k\wedge \G \false
$ in equation (\ref{eq:tr-1}), then $F_1$ holds at $t_1$  iff
$\Part{\delta_1, \dots ,\delta_{k-1}  , \delta'_k}{O}$ holds on
$[t_1,\infty)$.
 Hence, we obtained:
\begin{lem} \label{lem:trans-infty}  For every $ \delta_1, \dots , \delta_k$
and $O\subseteq \{1,\dots,k\}$ there is a $\TL(\until,\suntil)$
formula $F$ such that $F$ holds at $t$ iff $\Part{\delta_1, \dots ,
\delta_k}{O}$  holds on $[t,\infty)$.
\end{lem}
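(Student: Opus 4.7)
The plan is to follow the construction sketched in the paragraph immediately preceding the lemma and verify its correctness by reducing to Lemma \ref{lem:def-fi}(3). Concretely, I would first fix $\TL(\until)$ formulas $D'_1, \dots, D'_k$ equivalent to $\delta_1, \dots, \delta_k$; such equivalents exist since each $\delta_i$ is quantifier-free and so is a Boolean combination of propositional atoms. I would then set $D_i := D'_i$ for $i < k$, and choose the base formula of the recurrence to be $D_k := D'_k \wedge \G D'_k$ when $k \notin O$ and $D_k := D'_k \wedge \G \false$ when $k \in O$. Running the recurrence (\ref{eq:tr-2}) with these $D_i$'s produces $F_k, F_{k-1}, \dots, F_1$, and I would take $F := F_1$. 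This $F$ lies in $\TL(\until,\suntil)$ because the recurrence only introduces $\until$ and $\suntil$ while $\G$ is already definable in $\TL(\until)$.

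To verify correctness, I would introduce an auxiliary formula $\tilde{\delta}_k(x)$, defined as $\delta_k(x) \wedge (\forall y > x)\,\delta_k(y)$ when $k \notin O$, and as $\delta_k(x) \wedge \neg \exists y\,(y > x)$ when $k \in O$. In each case the temporal formula $D_k$ above is equivalent to $\tilde{\delta}_k$. Lemma \ref{lem:def-fi}(3), applied to the modified tuple $\delta_1, \dots, \delta_{k-1}, \tilde{\delta}_k$, then yields: $F$ holds at $t$ iff there is $t' \geq t$ such that $\Part{\delta_1, \dots, \delta_{k-1}, \tilde{\delta}_k}{O}$ holds on $[t, t']$. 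Note that Lemma \ref{lem:def-fi} nowhere requires the $\delta_i$ to be quantifier-free; it only asks that each $D_i$ be equivalent to $\delta_i$, which is exactly what we arranged.

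The remaining step is to show that this condition is equivalent to $\Part{\delta_1, \dots, \delta_k}{O}$ holding on $[t, \infty)$. For the forward direction, take a witnessing partition $I_1, \dots, I_k$ of $[t, t']$ and let $s$ be the first point of $I_k$. If $k \notin O$, then $\tilde{\delta}_k(s)$ forces $\delta_k$ to hold throughout $[s, \infty)$, so replacing $I_k$ by $[s, \infty)$ gives a partition of $[t, \infty)$ witnessing the original expression (the preceding intervals $I_1, \dots, I_{k-1}$ still cover $[t, s)$). If $k \in O$, then $I_k = \{s\}$ and $\tilde{\delta}_k(s)$ forces $s$ to be the greatest element of the chain, so $[s, \infty) = \{s\}$ and the same partition works. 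For the converse, given a partition of $[t, \infty)$ witnessing $\Part{\delta_1, \dots, \delta_k}{O}$, let $s$ be the first point of $I_k$, and truncate $I_k$ to $\{s\}$: in the case $k \notin O$ we have $\delta_k$ throughout $I_k = [s, \infty)$ (so $\tilde{\delta}_k(s)$ holds), and in the case $k \in O$ we already have $I_k = \{s\}$ with $s$ maximal (so again $\tilde{\delta}_k(s)$ holds), producing a partition of $[t, s]$ of the required form.

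No step is a serious obstacle; the whole argument reduces to a clean case analysis on whether $k \in O$, together with the observation that $\G D'_k$ and $\G \false$ encode, respectively, ``$\delta_k$ holds on the entire tail $[s,\infty)$'' and ``$s$ is the last point of the chain,'' which are exactly the two ways the rightmost interval of the partition can degenerate when we stretch it to $\infty$.
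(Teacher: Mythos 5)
Your construction is exactly the paper's: the paragraph preceding the lemma sets $D_k := D'_k \wedge \G D'_k$ (resp.\ $D_k := D'_k \wedge \G \false$) according to whether $k\notin O$ or $k\in O$ and runs recurrence (\ref{eq:tr-2}), so your proposal matches the intended proof and usefully supplies the verification the paper leaves implicit. One small caution in that verification: the last interval $I_k$ of a witnessing partition need not have a first point (it may begin at a gap, which is precisely the situation this paper is concerned with), so replace ``let $s$ be the first point of $I_k$'' by ``pick any $s\in I_k$'' and truncate or extend $I_k$ at $s$ accordingly --- your case analysis then goes through unchanged.
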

By Lemma \ref{lem:trans-infty} and  standard logical equivalences we
obtain:
\begin{prop}[From simple formulas to $TL$]
 \label{prop:forms} 
Every simple formula     with at most one free  variable  is
equivalent
  to a  $\TLs$ formula.
\end{prop}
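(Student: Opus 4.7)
The plan is to use Lemma~\ref{lem:triv1}(1) to reduce to normal formulas and then translate each of the three basic building blocks of a normal formula with at most one free variable.

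By Lemma~\ref{lem:triv1}(1), every simple formula is equivalent to a disjunction of normal formulas, and a normal formula with at most one free variable is a conjunction of at most three pieces: a forward-ray partition $W_2[z_1,\infty)$, a backward-ray partition $W_1(-\infty,z_1]$, and a full-line sentence $W_0(-\infty,\infty)$. Since $\TLs$ is closed under disjunction and conjunction, it suffices to translate each of these three pieces. The forward-ray piece is already handled by Lemma~\ref{lem:trans-infty}. For the backward-ray piece I apply the mirror image of Lemma~\ref{lem:trans-infty}: running the inductive construction of equations~(\ref{eq:tr-1}) and (\ref{eq:tr-2}) with $\since$ and $\ssince$ in place of $\until$ and $\suntil$ yields a $\TL(\since,\ssince)$ formula whose truth at $t$ is equivalent to $\Part{\delta_1,\dots,\delta_k}{O}$ holding on $(-\infty,t]$.

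The remaining piece $W_0(-\infty,\infty) = \Part{\delta_1,\dots,\delta_n}{O}(-\infty,\infty)$ is the main obstacle, since it is a sentence that must be represented by a $\TLs$ formula evaluated at a moment $z$. My approach is a splitting trick: if the global partition $I_1,\dots,I_n$ exists, then $z$ lies in some $I_j$, and the global partition decomposes as a partition of $(-\infty,z]$ ending with $I_j\cap(-\infty,z]$ together with a partition of $[z,\infty)$ beginning with $I_j\cap[z,\infty)$. This motivates
\[
\psi(z) \;=\; \bigvee_{j=1}^n \Bigl( D_j(z) \andl \Part{\delta_1,\dots,\delta_j}{O^-_j}(-\infty,z] \andl \Part{\delta_j,\delta_{j+1},\dots,\delta_n}{O^+_j}[z,\infty) \Bigr),
\]
where $D_j$ is a temporal formula equivalent to the quantifier-free $\delta_j$, and $O^-_j \subseteq \{1,\dots,j\}$, $O^+_j \subseteq \{1,\dots,n-j+1\}$ are obtained from $O$ by restriction and re-indexing, with the convention that $j \in O^-_j$ and $1 \in O^+_j$ exactly when $j \in O$. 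Applying Lemma~\ref{lem:trans-infty} and its mirror to the two half-line partition formulas turns $\psi(z)$ into a $\TLs$ formula. Correctness is a routine gluing argument: in one direction any $z$ witnesses the disjunct indexed by the $I_j$ containing it; in the other, the two halves glue at $\{z\}$ to produce a global partition, and the convention on $O^\pm_j$ forces $I_j$ to be the singleton $\{z\}$ whenever $j \in O$. Conjoining the three translated pieces finishes the proof.
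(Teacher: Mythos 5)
Your proof is correct and takes essentially the same route as the paper: reduce a one-variable simple formula to ray formulas $\Parts{1}{n}{O}[z,\infty)$ and $\Parts{1}{n}{O}(-\infty,z]$, which are handled by Lemma~\ref{lem:trans-infty} and its $\since$/$\ssince$ mirror, and dispose of the full-line sentence by splitting it at a point into two half-line partition formulas. The only (harmless) difference is in that last step: the paper splits at an arbitrary point of the \emph{first} block, obtaining a temporal formula $A$ that it then wraps in a ``somewhere'' operator $\neg(\G\neg A\wedge\neg A\wedge\Gb\neg A)$, whereas you split at the evaluation point itself via a disjunction over which block contains it, which makes the resulting formula position-independent and lets you skip the wrapper.
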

\begin{proof} 
Note that every simple partition formula with at most one free
variable $z$ is equivalent to a boolean combination of basic
partition formulas of the forms:   $ \texttt{Part} (\langle \delta_1
\rangle ,O )[z,z]  $, $\Parts{1}{n}{O} [z,\infty)$, $\Parts{1}{n}{O}
(-\infty,z]$, or $\Parts{1}{n}{O} (-\infty,\infty)$. Let $D_1$ be a
temporal formula equivalent to the first-order quantifier free
formula $\delta_1$. A formula of the form $ \texttt{ Part} (\langle
\delta_1  \rangle ,O )[z,z]  $ is equivalent to $D_1$.
 By Lemma
\ref{lem:trans-infty} and its mirror variant the formulas of the
 second and the third forms are equivalent to $\TLs$
formulas. \sloppy{ A formula of the form $\Parts{1}{n}{O}
(-\infty,\infty)$ is equivalent to ``$ \texttt{Part} (\langle
\delta_1  \rangle ,O )(-\infty,z]  \wedge  \Parts{1}{n}{O}
[z,\infty)$   for some $z$." }Since each of the conjuncts is
equivalent to a temporal formula, the conjunction is also equivalent
to a temporal formula $A$, and
  $\Parts{1}{n}{O}
(-\infty,\infty)$ is equivalent to $\neg(  \G \neg A  \wedge \neg
A\wedge \Gb \neg A )$.
%
 Hence, every simple formula with at most one free variable is
equivalent to a $\TLs$ formula.
\end{proof}

\section{Proof of Stavi's Theorem}\label{sect:proof-stavi}
The next definition plays a major role in our proof of Stavi's
theorem;  a similar definition is used in the proof of Kamp's
theorem \cite{GPSS80}.
\begin{defi} Let $\mM$ be a $\Sigma$ chain. We denote by  $\mE[\Sigma]$  the set of unary predicate names
$\Sigma\cup\{A~\mid A$ is an $\TLs$-formula over $\Sigma ~\}$.
 The canonical $\TLs $-expansion of $\mM$ is  an expansion of
$\mM$ to an $\mE[\Sigma]$-chain,  where each predicate name $A\in
\mE[\Sigma]$ is interpreted as $\{a\in \mM\mid \mM,a\models
A\}$\footnote{ We often  use  ``$a\in \mM$'' instead of ``$a$ is an
element of the domain of $\mM$.''}.
\end{defi}

Note that if $A$ is a $\TLs $ formula  over $\mE[\Sigma]$
predicates, then it is equivalent to a $\TLs  $ formula over
$\Sigma$, and hence to an atomic formula in the canonical $\TLs
$-expansions.

 From now on
 we say  ``formulas are equivalent  in a
 chain $\mM$''   instead of ``formulas are equivalent  in the
canonical $\TLs $-expansion of $\mM$.'' The partition formulas   are
defined as previously, but now they can use as atoms $\TLs
 $ definable predicates.

It is clear that   the results stated  in Sect. \ref{sect:part} hold
for this modified notion of partition formulas. In particular, every
simple formula     with at most one free  variable  is equivalent
  to a  $\TLs$ formula, and
 the set of simple formulas is
closed under conjunction, disjunction and existential
quantification. However, now the set of simple formulas is also
closed under negation, due to
 the next proposition whose proof is postponed to Sect.
\ref{sect:s-comp}.

\begin{prop}[Closure under Negation] \label{lem:sneg}
 The negation of every simple  partition formula  
is equivalent to    a simple partition formula.

\end{prop}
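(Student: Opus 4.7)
The plan is to reduce, via De Morgan together with the closure of simple formulas under $\wedge$ and $\vee$ (Lemma 3.4), to proving that the negation of each \emph{basic} partition formula is equivalent to a simple partition formula. The atomic order cases $z=y$ and $z<y$ negate directly to disjunctions of basic order formulas, so the real content of the proposition lies in negating the partition shapes $\Parts{1}{n}{O}[y,z]$, $\Parts{1}{n}{O}[z,\infty)$, $\Parts{1}{n}{O}(-\infty,z]$ and $\Parts{1}{n}{O}(-\infty,\infty)$.

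For the three shapes with at most one free variable I would leverage Proposition 3.6 (and its mirror image). Each such $\phi$ is equivalent to a $\TLs$ formula $F_\phi$, so in the canonical $\TLs$-expansion it corresponds to a unary atom $A_\phi$. Its complement $\neg A_\phi$ is then itself an atom $B$ of the expansion, and for the one-variable shapes we get $\neg\phi(z)\equiv B(z)$, which is the basic partition formula $\Part{B(x)}{\{1\}}[z,z]$. In the closed case the $\TLs$ formula produced by Proposition 3.6 describes a global property of the chain and hence its truth value does not depend on the evaluation point; its negation $B$ is therefore also truth-constant across the chain, and $\neg\phi$ is equivalent to the basic closed formula $\Part{B(x)}{\emptyset}(-\infty,\infty)$. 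The genuinely hard case is thus $\Parts{\delta_1,\dots,\delta_n}{O}[y,z]$ with both endpoints free.

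For the two-variable case, my plan is to introduce \emph{complete} partition formulas over an enriched alphabet: the Boolean atoms of the algebra generated by $\{\delta_1,\dots,\delta_n\}$ together with new $\TLs$-definable predicates that classify the local geometric type of every internal boundary of a partition — whether the left piece has a maximum, whether the right piece has a minimum, whether the transition straddles a gap, and which atom holds at any actual endpoint. The intended properties are: (a) distinct completions are mutually exclusive on any interval; (b) on every non-empty closed interval some completion holds; and (c) the given basic formula is equivalent to a finite disjunction of completions drawn from a list depending only on $\delta_1,\dots,\delta_n,O$. Granting (a)--(c), the negation of the basic formula is just the disjunction of the remaining completions, and that is again a simple partition formula.

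The principal obstacle will be property (c): showing that the boundary behaviour between consecutive pieces admits a \emph{finite} classification by $\TLs$-definable predicates in the canonical expansion. This is where the Stavi modalities $\suntil, \ssince$ and the derived $\bgamma^+,\bgamma^-$ should be essential, since they are precisely what is needed to name, over arbitrary linear orders, the gap-like boundary behaviour not captured by $\until$ and $\since$ alone. Once every boundary type is labelled by a $\TLs$-atom, the pieces of a partition form a word over a finite alphabet, completions become finite sequences in that alphabet, and closure under negation reduces to a finite enumeration.
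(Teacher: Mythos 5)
Your opening reduction is sound and matches the paper's own first step (its Lemma \ref{lem-reduction}): De Morgan plus closure under $\wedge,\vee,\exists$, the one-free-variable and sentence-shaped basic formulas absorbed into atoms of the canonical $\TLs$-expansion via Proposition \ref{prop:forms}, leaving only the two-variable shape $\Part{\delta_1,\dots,\delta_n}{O}[z_0,z_1]$ (equivalently, its open-interval variant) to negate. The problem is that everything after that point is a plan rather than a proof, and the plan as stated has a gap that I do not believe can be closed in the proposed form. Your property (c) asks for a finite, mutually exclusive, exhaustive family of ``completions'' --- simple partition formulas classifying the geometric type of every boundary --- such that the given basic formula is a union of classes. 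But the way the predicates $\delta_1,\dots,\delta_n$ can behave on an interval is not finitely classifiable as a word over a finite alphabet: an interval may switch between $\delta_1$ and $\neg\delta_1$ unboundedly often, even densely, so there is no bounded-length sequence of ``pieces with labelled boundaries'' covering all intervals. What is true is that only finitely many types matter for deciding a fixed partition expression, but proving that each such type is itself expressible by a simple partition formula is exactly the statement you are trying to prove; the completion scheme restates the proposition rather than reducing it.

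The paper's actual mechanism, which your proposal does not contain, is an induction on the number $n$ of blocks. One writes $\Parts{1}{n+1}{O}(z_0,z_1)$ both as $\Elub{z}{z_0}{z_1}A_i$ and (when the interval is non-empty) as $\Alub{z}{z_0}{z_1}\bigvee_i A_i$, where $A_i(z_0,z,z_1)$ says that $z$ falls in the $i$-th block and the prefix and suffix partition expressions, with fewer blocks, hold on $(z_0,z)$ and $(z,z_1)$; negation then becomes $\Elub{z}{z_0}{z_1}\bigwedge_i\neg A_i$, and the inner negations are handled by the inductive hypothesis once a suitable witness point $z$ is pinned down. Producing that witness is where the case analysis and the Stavi machinery genuinely enter: one splits on whether $\inf\{z\mid\neg\delta_1(z)\}$ (and dually $\sup\{z\mid\neg\delta_{n+1}(z)\}$) is attained, coincides with an endpoint, or is a gap, the gap cases being described by $\bgamma^+$, $\bgamma^-$ and formulas built from $\suntil$ and $\ssince$. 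Your intuition that the Stavi modalities are needed precisely to name gap-like boundary behaviour is correct, but without the induction on the block count and the definable splitting point or gap, the argument does not go through.
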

As a consequence we obtain:
\begin{prop}\label{prop:fo2ea}
 Every first-order formula
is equivalent to  a simple formula.

\end{prop}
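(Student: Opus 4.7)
The plan is to proceed by structural induction on the first-order formula $\varphi$, using the closure properties of simple partition formulas already established for the canonical $\TLs$-expansion. For the base case, I would verify that every atomic first-order formula is already a simple partition formula: the atoms $x<y$ and $x=y$ are basic partition formulas by definition, and an atom $P(x)$ (where $P$ ranges over $\mE[\Sigma]$, since we are working in the canonical $\TLs$-expansion) is equivalent to the basic partition formula $\Part{P(x)}{\{1\}}[x,x]$, which asserts that the degenerate interval $[x,x]$ is partitioned into a single one-point block on which $P$ holds.

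For the inductive step, conjunction and disjunction preserve simplicity by Lemma \ref{lem:closure1}, and the existential quantifier preserves simplicity by the same lemma. The universal quantifier $\forall x\,\psi$ is handled by rewriting it as $\neg \exists x\, \neg \psi$ and invoking the existential case together with closure under negation. Closure under negation is precisely the content of Proposition \ref{lem:sneg}, which has been granted here (its proof is postponed to Section \ref{sect:s-comp}). Combining these facts, each connective in the construction of $\varphi$ is matched by a corresponding operation on simple formulas, and the induction goes through.

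The only genuine obstacle is Proposition \ref{lem:sneg}; everything else is a routine propagation of closure properties. It is worth emphasizing that the use of the canonical $\TLs$-expansion is indispensable precisely at the negation step: over the bare $\Sigma$-chain the class of simple formulas fails to be closed under negation, so the induction would break down on $\neg$ (and consequently on $\forall$). With Proposition \ref{lem:sneg} in hand, the present proposition is essentially a one-line corollary obtained by structural induction combining Lemma \ref{lem:closure1} and Proposition \ref{lem:sneg}.
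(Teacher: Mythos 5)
Your proposal is correct and follows essentially the same route as the paper: structural induction with atomic formulas handled directly, negation discharged by Proposition \ref{lem:sneg}, and the remaining connectives and the existential quantifier by Lemma \ref{lem:closure1} (with $\forall$ reduced to $\neg\exists\neg$). Your explicit rendering of the atom $P(x)$ as $\texttt{Part}(\langle P(x)\rangle,\{1\})[x,x]$ and your remark on why the canonical $\TLs$-expansion is indispensable at the negation step are both accurate elaborations of what the paper leaves implicit.
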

\begin{proof} We proceed by   structural induction.
\begin{description}
\item[Atomic]
It is clear that every atomic formula is equivalent to a simple
 formula.


\item[Negation]
By  Proposition \ref{lem:sneg}.


\item[$\exists$-quantifier and disjunction ]
This  follows from Lemma \ref{lem:closure1}.  \qedhere
\end{description}
\end{proof}

Proposition \ref{prop:fo2ea} and Proposition \ref{prop:forms}
immediately imply
  Stavi's   Theorem:
\begin{thm}\sloppy{
Every  $\FOMLOiB$ formula   with one free variable
     is equivalent to  a $ \TLs  $  formula.}
\end{thm}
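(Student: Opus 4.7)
The plan is to derive the theorem as an immediate consequence of the two main propositions already developed in the paper: Proposition~\ref{prop:fo2ea}, which converts an arbitrary first-order formula into an equivalent simple partition formula (over the canonical $\TLs$-expansion), and Proposition~\ref{prop:forms}, which translates a simple partition formula with at most one free variable into a $\TLs$ formula. So the proof will essentially be a two-line composition of these results, and the real work has already been carried out in Sections~\ref{sect:part} and~\ref{sect:s-comp}.

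Concretely, given a $\FOMLOi$ formula $\varphi(z)$ with a single free variable, I would first invoke Proposition~\ref{prop:fo2ea} to obtain a simple partition formula $\psi(z)$ equivalent to $\varphi(z)$ in the canonical $\TLs$-expansion of every chain. Since $\varphi$ has only one free variable, and since the construction in the structural induction behind Proposition~\ref{prop:fo2ea} does not introduce free variables beyond those of the input, $\psi$ likewise has at most one free variable. Then I apply Proposition~\ref{prop:forms} to $\psi(z)$ to produce a $\TLs$ formula $F$ equivalent to $\psi$. Composing the two equivalences, $\varphi(z)$ is equivalent to $F$ in the canonical $\TLs$-expansion of every chain, i.e.\ $\mM,t \models \varphi(z)$ iff $\mM,t \models F$ for every chain $\mM$ and every $t\in\mM$. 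This is exactly the statement that $\varphi$ is equivalent to a $\TLs$ formula over all linear orders.

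The only subtle point worth flagging is that the notion of equivalence here implicitly goes through the canonical $\TLs$-expansion: the intermediate simple formula $\psi$ may mention predicates $A$ corresponding to $\TLs$ subformulas over $\Sigma$, but by the remark following the definition of the canonical expansion these are themselves equivalent to $\TLs$ formulas over $\Sigma$, so the final $F$ is a genuine $\TLs(\Sigma)$ formula. The main obstacle in the overall argument was already isolated and discharged in Proposition~\ref{lem:sneg} (closure under negation), which is what enables Proposition~\ref{prop:fo2ea}; once that is in hand, the theorem itself requires no further combinatorial or model-theoretic work beyond a straightforward chaining of the two propositions.
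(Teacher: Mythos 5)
Your proposal is correct and is essentially identical to the paper's own argument, which derives the theorem as an immediate consequence of Proposition~\ref{prop:fo2ea} and Proposition~\ref{prop:forms}. The subtlety you flag about the canonical $\TLs$-expansion is handled in the paper by the remark that any $\TLs$ formula over $\mE[\Sigma]$ predicates is equivalent to a $\TLs$ formula over $\Sigma$, exactly as you argue.
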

 This completes our proof of Stavi's theorem except 
for Proposition \ref{lem:sneg}
  which is proved in Sect.
\ref{sect:s-comp}.

\section{Proof of Proposition \ref{lem:sneg}} \label{sect:s-comp}
Throughout our  proof we will  freely use that the following
assertions and their negations are expressible by simple formulas:
\begin{enumerate}
\item $(z_0,z_1)$ contains a point in $P$.
\item   $\suc(z_0,z_1)$  -  $z_1$ is a successor of $z_0$.
\item interval $(z_0,z_1)$ contains  exactly $k$ points.
\item interval $(z_0,z_1)$ contains  at most $k$ points.
\end{enumerate}
Let us introduce some helpful notations.
\begin{nota} \sloppy{We use the  abbreviated notations  $\Parts{1}{n}{O} (z_0,z_1)$
for   $ \Part{\true,\delta_1, \dots , \delta_n,\true}{O'}[z_0,z_1]$,
where $O':=\{1,n+2\}\cup \{i+1\mid i\in O\}$. Hence,
$\mM,t_0,t_1\models \Parts{1}{n}{O} (z_0,z_1)$ iff
$\Parts{1}{n}{O}$ holds on the open interval $(t_0,t_1)$ in $\mM$.
Similarly, $\Parts{1}{n}{O} (z_0,z_1]$  stands  for   $
\Part{\true,\delta_1, \dots , \delta_n}{O'}[z_0,z_1]$, where
$O':=\{1\}\cup \{i+1\mid i\in O\}$; and $\Parts{1}{n}{O} [z_0,z_1)$
for   $ \Part{\delta_1, \dots , \delta_n,\true}{O'} [z_0,z_1]$,
where  $O':=\{n+1\}\cup O$.}
\end{nota}
By  Proposition \ref{prop:forms} and standard logical equivalences
we obtain: 
\begin{lem}\label{lem-reduction}
\sloppy If every formula of the form $\neg \Parts{1}{n}{O}
(z_0,z_1)$ is equivalent to a simple formula,  then the negation of
every simple formula is equivalent to a simple formula.
\end{lem}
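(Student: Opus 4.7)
The plan is to push the negation inward with De~Morgan's laws and reduce the problem to negations of basic partition formulas. Since a simple formula is by definition a Boolean combination (using only $\wedge$ and $\vee$) of basic partition formulas, and simple formulas are closed under $\wedge$ and $\vee$ by Lemma~\ref{lem:closure1}, it suffices to exhibit a simple formula equivalent to the negation of each basic partition formula.

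The cases $\neg(z=y)$ and $\neg(z<y)$ reduce at once to $(z<y)\vee(y<z)$ and $(z=y)\vee(y<z)$, both simple. The semi-infinite and bi-infinite basic forms $\Parts{1}{n}{O}[z,\infty)$, $\Parts{1}{n}{O}(-\infty,z]$, and $\Parts{1}{n}{O}(-\infty,\infty)$ each have at most one free variable, so by Proposition~\ref{prop:forms} each is equivalent to a $\TLs$-formula, hence (since we work in the canonical $\TLs$-expansion, where every $\TLs$-formula is an atom of $\mE[\Sigma]$) to an atomic predicate $A(z)$ (or a truth constant). Its negation is the complementary atom, which is itself a basic partition formula.

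The only genuine remaining case is the negation of a closed finite-interval partition $\Parts{1}{n}{O}[y,z]$. The plan here is to first rewrite $\Parts{1}{n}{O}[y,z]$ itself as a simple formula whose only atomic ingredients are order/equality comparisons of $y,z$, quantifier-free endpoint conditions $\delta_i(y)$ and $\delta_j(z)$ (each expressible as a basic partition formula of the form $\Part{\delta_i}{\{1\}}[y,y]$), and open-interval partition formulas of the form $\Part{\ldots}{\ldots}(y,z)$. Once this is done, one more application of De~Morgan's laws converts the negation of $\Parts{1}{n}{O}[y,z]$ into a Boolean combination of simple formulas and formulas of the form $\neg\Part{\ldots}{\ldots}(y,z)$, and the latter are simple by the hypothesis of the lemma; closure under $\wedge$ and $\vee$ then yields the desired simple formula.

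To produce such a decomposition, split on the relative order of $y$ and $z$. The cases $y>z$ (false) and $y=z$ (reducing to ``$n=1$ and $\delta_1(y)$'') are immediate. For $y<z$, any putative partition $I_1<\cdots<I_n$ of $[y,z]$ must have $y\in I_1$ and $z\in I_n$, so case on whether $I_1=\{y\}$ and whether $I_n=\{z\}$. Each of the (at most four) sub-cases expresses $\Parts{1}{n}{O}[y,z]$ as a conjunction of $\delta_1(y)$, $\delta_n(z)$, and a single open-interval partition formula on $(y,z)$, whose list of $\delta$s is obtained by trimming $\delta_1$ and/or $\delta_n$ when $I_1$ and/or $I_n$ are singletons and by retaining them (with the corresponding indices dropped from $O$) otherwise; the degenerate sub-cases in which the middle interval is forced to be empty are handled by the auxiliary formulas $\suc(y,z)$ or ``$(y,z)$ contains exactly $k$ points'' listed at the opening of Section~\ref{sect:s-comp}. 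The main obstacle is the careful bookkeeping of the singleton-index set $O$ under this reshaping; once the case analysis is made explicit, the lemma follows from the De~Morgan reduction described above.
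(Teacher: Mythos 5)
Your proof is correct and follows essentially the same route as the paper's: reduce by De~Morgan to negations of basic partition formulas, dispose of the one-free-variable forms via Proposition~\ref{prop:forms} and the canonical expansion, and rewrite $\Parts{1}{n}{O}[y,z]$ as a positive Boolean combination of order/endpoint conditions ($\suc$, $y<z$, $y=z$, $\delta_1(y)$, $\delta_n(z)$) and open-interval partition formulas on $(y,z)$, so that the lemma's hypothesis applies after one more De~Morgan step. The only difference is one of detail: the paper asserts this last decomposition without proof, whereas you supply the explicit case analysis on the order of $y,z$ and on whether $I_1$ and $I_n$ are singletons.
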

 \begin{proof}
\leavevmode
%

\begin{enumerate}
\item Every basic partition formula $\varphi$ either (a)  has at most one free  variable
and then $\varphi $ and $\neg \varphi$ are equivalent to   simple
formulas by Proposition   \ref{prop:forms},  or (b) 
is equivalent to a formula of the form $\Part{\delta_1, \dots ,
\delta_k}{O }[z_0,z_1]$.
\item
 A formula of the form $\Part{\delta_1, \dots ,
\delta_k}{O }[z_0,z_1]$ is equivalent to a formula constructed by
disjunction and conjunction from formulas of the forms: (a)
$\Parts{1}{n}{O'} (z_0,z_1)$  and (b) $\suc(z_0,z_1)$, $z_0<z_1$,
$z_0=z_1$, $\delta_1(z_0)$ and $\delta_k(z_1)$, where $\delta_i(z)$
is a quantifier-free first-order formula. Formulas of the form (b)
and their negations are equivalent to simple formulas.
\end{enumerate}
Hence, if every formula of the form $\neg \Part{\delta_1, \dots ,
\delta_k}{O }(z_0,z_1)$ is equivalent to a simple formula,
by the definition of simple formulas, (1)-(2) and De Morgan's laws
we obtain the conclusion of the Lemma.
 \end{proof}
Lemma \ref{lem-reduction} and the next proposition immediately imply
Proposition \ref{lem:sneg}.
\begin{prop}[Closure under negation] \label{lem:snegopen}
Every formula of the form \[\neg \Parts{1}{n}{O} (z_0,z_1)\]  is
equivalent to a simple formula.
\end{prop}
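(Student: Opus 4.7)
I would prove Proposition~\ref{lem:snegopen} by strong induction on $n$, exploiting that in the canonical $\TLs$-expansion every $\TLs$-definable predicate may be treated as an atom and hence used as one of the $\delta_i$ in a partition expression. The goal at each inductive stage is to exhibit $\Parts{1}{n}{O}(z_0,z_1)$ as a disjunction of conjunctions whose conjuncts are either simple formulas or are already known (by the induction hypothesis) to have simple negations; closure of simple formulas under conjunction and disjunction (Lemma~\ref{lem:closure1}) together with De~Morgan then yields a simple formula for $\neg\Parts{1}{n}{O}(z_0,z_1)$.

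\textbf{Base case.} For $n=1$ the formula $\Parts{1}{1}{O}(z_0,z_1)$ asserts that $(z_0,z_1)$ is non-empty, $\delta_1$ holds throughout $(z_0,z_1)$, and, if $1\in O$, that $(z_0,z_1)$ is a singleton. Its negation splits into three cases: $(z_0,z_1)$ is empty; $(z_0,z_1)$ contains a point satisfying $\neg\delta_1$; or $1\in O$ and $(z_0,z_1)$ contains at least two points. Each of these is listed at the beginning of Sect.~\ref{sect:s-comp} as expressible by a simple formula.

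\textbf{Inductive step.} For $n\geq 2$, I would decompose $\Parts{1}{n}{O}(z_0,z_1)$ into a disjunction over the way $I_1$ meets the remainder $I_2\cup\cdots\cup I_n$: case (a), $I_1$ has a maximum point $c\in(z_0,z_1)$; case (b), $I_1$ has no maximum but $I_2$ has a minimum point $c\in(z_0,z_1)$; and case (c), $I_1$ has no maximum and $I_2$ has no minimum, so that they are separated by a gap. Cases~(a) and~(b) are captured directly by conjunctions of basic partition formulas over $[z_0,c]$ and $(c,z_1)$ (or $[c,z_1)$); after extracting the atomic piece $\delta_1$ (together with the $1\in O$ singleton condition) on the prefix, the tail reduces to a shorter $\Parts{2}{n}{O'}$-formula whose negation is simple by the induction hypothesis, and the middle splitting point $c$ is absorbed into a basic partition formula on $[z_0,z_1]$.

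\textbf{The gap case and main obstacle.} The real work is case~(c), because gaps are not first-order quantifiable. Here I use the canonical $\TLs$-expansion: the assertion ``there is a $\delta_1$-gap $g$ above the current point and strictly below $z_1$ such that from immediately past $g$ up to $z_1$ the tail partition $\Parts{2}{n}{O\setminus\{1\}}$ holds'' can be built from Stavi's $\suntil$ modality---which was designed precisely to witness such left-definable gaps followed by an uninterrupted stretch of a given property---combined with the $\TLs$-formula provided by Proposition~\ref{prop:forms} for the tail partition beyond the gap. Call this $\TLs$-formula $A$; in the canonical expansion $A$ is an atom, so case~(c) reduces to a conjunction of a basic partition formula using $A$ at the appropriate position together with a simple assertion that the gap lies strictly inside $(z_0,z_1)$. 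The main obstacle is precisely orchestrating this: one has to pin down the gap as terminating $I_1$, guarantee that the tail partition resumes ``immediately past'' the gap (itself possibly another gap, which may necessitate a secondary induction or a further sub-case), and keep everything strictly inside $(z_0,z_1)$. Once this is arranged, each case is a conjunction of formulas with simple negations, and the overall negation of $\Parts{1}{n}{O}(z_0,z_1)$ is simple by closure under conjunction and disjunction.
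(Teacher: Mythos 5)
There is a genuine gap, and it sits exactly at the step you gloss over at the end. Your decomposition of $\Parts{1}{n}{O}(z_0,z_1)$ by ``how $I_1$ meets the remainder'' produces disjuncts each of which is an \emph{existential} statement: case (a) asserts the existence of a splitting point $c\in(z_0,z_1)$ with a prefix condition on $(z_0,c]$ and a tail condition on $(c,z_1)$, and case (c) asserts the existence of a suitable gap. Negating the disjunction by De~Morgan yields a conjunction of negated existentials, i.e.\ \emph{universally} quantified statements over $c$. But simple formulas are closed under $\exists$, $\wedge$ and $\vee$ only -- not under $\forall$ or $\neg$ -- and the induction hypothesis gives you simplicity of $\neg\Parts{2}{n}{O'}(c,z_1)$ for a \emph{fixed} $c$, not of ``for all $c$, the prefix fails or the tail fails.'' Your closing claim that ``each case is a conjunction of formulas with simple negations'' is therefore not established: the outer existential over the splitting point (or gap) is not a conjunct that De~Morgan can dispose of. This is the central difficulty of the proposition, and the proposal does not contain a mechanism for it.

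The paper's proof supplies precisely such a mechanism, which your outline lacks. It introduces formulas $A_i(z_0,z,z_1)$ saying that $z$ can be placed in the $i$-th block of some witnessing partition, and proves the two equivalences: $\Parts{1}{n}{O}(z_0,z_1)\Leftrightarrow\Elub{z}{z_0}{z_1}A_i$ for each $i$, and, when $(z_0,z_1)\NEQ\emptyset$, $\Parts{1}{n}{O}(z_0,z_1)\Leftrightarrow\Alub{z}{z_0}{z_1}\bigvee_i A_i$. The second equivalence converts the negation into the \emph{existential} formula $\Elub{z}{z_0}{z_1}\bigwedge_i\neg A_i$, after which the inductive hypothesis handles the middle conjuncts and a case analysis on where $\inf\{z\mid\neg\delta_1(z)\}$ and $\sup\{z\mid\neg\delta_n(z)\}$ lie (an endpoint, an interior point, or a gap -- the paper's Cases 1--6) supplies guards $\varphi(z)$ under which $\varphi\wedge\neg A_1$ and $\varphi\wedge\neg A_n$ are simple (Observation \ref{obs}). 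The paper also needs Proposition \ref{lem:occ-st} and Corollary \ref{cor:contS} to negate the residual existentials over interior points -- your cases (a)/(b) run straight into the statement negated in Corollary \ref{cor:contS}, which you would have to prove separately. Your use of $\suntil$ and the canonical expansion for the gap case is in the right spirit (compare Lemma \ref{lem:trans-form2} and Case 6 of the paper), but without the $\forall$-to-$\exists$ conversion the induction does not close.
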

Sect.   \ref{sect:proof-snegopen} contains  a  proof of Proposition
\ref{lem:snegopen}. In the next subsection we provide some useful
temporal logic formalizations. A proof  of the next proposition,
which is   very similar to the proof of Proposition
\ref{lem:snegopen} is presented in Sect.
\ref{sect:proof-lem:occ-st}.

\begin{prop} \label{lem:occ-st}
  The formula
  \begin{displaymath}
  \neg \E x_1 \dots  \E x_n
                             \left(z_0<x_1 <  \dots
  < x_n<z_1 \right)     \wedge ~\bigwedge_{i=1}^{n}  P_i(x_i)
\end{displaymath}
is
 equivalent  to a simple formula.
 \end{prop}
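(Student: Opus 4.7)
The plan is to deduce this as a consequence of Proposition \ref{lem:snegopen}, by first rewriting the positive formula
\[
\varphi_n(z_0,z_1) \;:=\; \exists x_1 \cdots \exists x_n\, (z_0 < x_1 < \cdots < x_n < z_1) \wedge \bigwedge_{i=1}^n P_i(x_i)
\]
as a finite disjunction of basic partition formulas on the open interval $(z_0,z_1)$, and then applying De Morgan.

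First I would expand $\varphi_n$ by cases according to which of the $n+1$ \emph{filler} intervals $(z_0,x_1), (x_1,x_2), \dots, (x_n,z_1)$ are non-empty. For each subset $S \subseteq \{0,1,\dots,n\}$ of non-empty filler indices, the witness pattern is captured by a basic partition formula whose block sequence on $[z_0,z_1]$ starts with the singleton $\{z_0\}$, then alternates between $\true$-fillers (one for each $i \in S$) and singletons labelled $P_i$ (for $i=1,\dots,n$), and ends with $\{z_1\}$. Empty fillers are simply omitted, which by the semantics of partition expressions forces the flanking $P_i$-singletons to be genuine successors. In the open-interval abbreviation this is a basic formula of the form $\Parts{1}{m_S}{O_S}(z_0,z_1)$, where $O_S$ marks the $P_i$-singleton positions. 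Hence $\varphi_n$ is equivalent to the finite disjunction of these partition formulas over $S$, and in particular $\varphi_n$ is simple.

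Next I would apply De Morgan to obtain
\[
\neg\varphi_n(z_0,z_1) \;\equiv\; \bigwedge_{S \subseteq \{0,1,\dots,n\}} \neg\,\Parts{1}{m_S}{O_S}(z_0,z_1).
\]
Each conjunct is equivalent to a simple formula by Proposition \ref{lem:snegopen}, and by Lemma \ref{lem:closure1} a finite conjunction of simple formulas is itself simple. This gives the required equivalence.

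The main obstacle is the bookkeeping of the first step: one must check that the enumeration over $S$ faithfully captures every witness configuration, that each ``empty filler'' case correctly encodes the forced successor relationship between consecutive $P_i$-singletons, and that each of the $2^{n+1}$ resulting block sequences is a syntactically legal basic partition expression. A self-contained alternative, closer in spirit to the paper's hint that the argument mirrors the proof of Proposition \ref{lem:snegopen}, would instead introduce the $\TLs$-definable auxiliary predicates $F_k(x) := \true \until (P_k \wedge F_{k+1})$ with $F_{n+1}=\true$, treat them as atoms in the canonical $\TLs$-expansion, and build the simple formula for $\neg\varphi_n$ directly by induction on $n$ via the same left-to-right partition analysis of $(z_0,z_1)$ used in Sect.~\ref{sect:proof-snegopen}.
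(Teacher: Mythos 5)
Your main route is circular within the paper's architecture. The positive rewriting of $\exists x_1\cdots\exists x_n\,(z_0<x_1<\cdots<x_n<z_1)\wedge\bigwedge_i P_i(x_i)$ as a finite disjunction of basic partition formulas over the subsets $S$ of non-empty fillers is correct, and the De Morgan step together with closure of simple formulas under conjunction (Lemma \ref{lem:closure1}) is fine as far as it goes. The problem is that the paper has no proof of Proposition \ref{lem:snegopen} that is independent of the present statement: the order of dependence is the reverse of yours. Proposition \ref{lem:occ-st} is proved first; Corollary \ref{cor:contS} is then derived from it (via Lemma \ref{lem:st-form3}); and Corollary \ref{cor:contS} is used inside the proof of Proposition \ref{lem:snegopen}, in Cases 1 and 1$'$ where $\delta_1$ (or $\delta_{n+1}$) holds throughout $(z_0,z_1)$. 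Nor does the circle close under an induction on length: your reduction for $n$ predicates invokes Proposition \ref{lem:snegopen} for partition expressions of length up to about $2n+1$, whose proof (through Corollary \ref{cor:contS}) invokes Proposition \ref{lem:occ-st} for roughly that many predicates, i.e., for \emph{larger} $n$. So unless you supply a proof of Proposition \ref{lem:snegopen} that avoids Proposition \ref{lem:occ-st}, the argument does not stand.

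Your closing ``self-contained alternative'' points in the right direction but is only a gesture. The paper's proof is a direct induction on $n$ with a case analysis on $c:=\inf\{z\in(z_0,z_1)\mid P_1(z)\}$ and $d:=\sup\{z\in(z_0,z_1)\mid P_{n+1}(z)\}$: no occurrence of $P_1$ or $P_{n+1}$ at all; $c=z_0$ or $d=z_1$ (detected by $\KPLUS$ and $\KMINUS$); $c$ or $d$ attained as an element of $(z_0,z_1)$ (split there and apply the inductive hypothesis to $n$ predicates); both gaps with $c\geq d$ (the negation is trivially true); and both gaps with $c<d$ (quantify a point $z$ strictly between the two gaps and take a conjunction over the $2n-1$ ways the witnesses can straddle $z$, each conjunct handled by the inductive hypothesis). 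The gap cases, which require $\bgamma^+$ and $\bgamma^-$ and hence Stavi's modalities, are where the real work lies; your sketch with $F_k:=\true\ \until\ (P_k\wedge F_{k+1})$ does not engage with them.
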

\subsection{Some formalizations in $\TLs$} \label{sect:some-form}
%
First, observe  that there is a $\TL(\until,\suntil)$  formula that
holds at $t$ if $t$ succeeds by
  a (left definable) $P_1$-gap  and until this gap
$P_1\wedge P_2$ holds. Indeed, the required formula is
$\untg(P_1,P_2):=\bgamma^+(P_1) \wedge  \bgamma^+(P_1\wedge P_2)
\wedge \neg ((P_1\wedge P_2)\suntil P_1)$.

Let  $\delta$ and $\delta'_1, \dots , \delta'_k$ be quantifier free
first-order formulas with
 one free variable. For $i=1,\dots k$,
let $D'_i$ be a  temporal formula equivalent to $\delta'_i$ and let
$D$  be a  temporal formula equivalent to $\delta$.

If we set $D_k:=\untg(D,D'_k)$
   in equation (\ref{eq:tr-1}) (see page \pageref{eq:tr-1})
and $D_i:=D'_i\wedge D$ for $i=1, \dots, k-1$ in equation
(\ref{eq:tr-2}),
 then $F_j(t_j)$ holds iff
 there is a $\delta$-gap  $g$ that succeeds $t_j$
  such that
 $\Part{\delta'_j, \dots , \delta'_k}{O} $ holds on $[t_j,g)$.
Hence, we obtained the following Lemma:
\begin{lem} \label{lem:trans-form2} For every $k$-tuple   $\langle \delta_1, \dots , \delta_k\rangle$,
  $O\subseteq \{1,\dots,k\}$  and $\delta$ there is a $\TL(\until,\suntil)$  formula $F$ such that
$F$ holds at $t$
if and only if
 there is a $\delta$-gap  $g$ that succeeds $t$
such that   $\Part{\delta_1, \dots , \delta_k}{O} $ holds on
$[t,g)$.
\end{lem}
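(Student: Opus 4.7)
The plan is to reuse the recursive construction from the proof of Proposition \ref{prop:forms}, but with modified atoms tailored to the gap-terminating setting, as sketched in the paragraph immediately preceding the lemma. Concretely, I take $F := F_1$, where $F_k, F_{k-1}, \dots, F_1$ is defined by equations (\ref{eq:tr-1}) and (\ref{eq:tr-2}) with base case $D_k := \untg(D, D'_k)$ and $D_i := D'_i \wedge D$ for $i < k$, where $D$ and $D'_i$ are $\TL$-formulas equivalent to $\delta$ and $\delta_i$ respectively. Correctness is then established by a downward induction on $j$ from $k$ to $1$, analogous to Lemma \ref{lem:def-fi}: namely, $F_j$ holds at $t_j$ iff there is a $\delta$-gap $g > t_j$ together with a partition $[t_j, g) = I_j \cup \cdots \cup I_k$ satisfying $\delta_l$ on $I_l$, singletons at $l \in O$, and intervals in order.

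The base case is the only genuinely new ingredient. By the definition of $\untg$, the formula $F_k = \untg(D, D'_k)$ holds at $t_k$ iff there is a $\delta$-gap $g > t_k$ with both $\delta$ and $\delta_k$ true throughout $(t_k, g)$; setting $I_k = [t_k, g)$ exhibits the required terminating block of the partition. For the inductive step, the four-case analysis of equation (\ref{eq:tr-2}) is applied verbatim: the appropriate $\until$ or $\until^*$ connective skips over $I_{i-1}$ and lands at the first point of $I_i$, where $F_i$ holds by the induction hypothesis. The extra conjunct $D$ inside each $D_i$ is automatically satisfied on intermediate points, since those points lie inside $(t_j, g)$ where $\delta$ holds by the choice of $g$ as a $\delta$-gap.

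The main subtlety I expect lies in the case $i - 1, i \notin O$, where $D_{i-1} \until^* F_i$ must find an honest witness strictly inside the model rather than at the gap itself. This goes through because every intermediate partition block $I_l$ for $l < k$ sits strictly below $g$, so its first point is always an element of the domain; only the rightmost block $I_k$ is allowed to stretch up to $g$, and it is handled entirely by the base case via $\untg$. Once this detail is in hand, the induction closes exactly as in Lemma \ref{lem:def-fi}, and we conclude that $F_1$ holds at $t$ iff the desired gap-terminated partition exists on $[t, g)$ for some $\delta$-gap $g > t$, as required.
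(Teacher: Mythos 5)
Your construction is exactly the paper's: it sets $D_k:=\untg(D,D'_k)$ and $D_i:=D'_i\wedge D$ for $i<k$ in equations (\ref{eq:tr-1})--(\ref{eq:tr-2}) and takes $F:=F_1$, with correctness argued by the same downward induction underlying Lemma \ref{lem:def-fi}; the paper states this construction immediately before the lemma and offers no further proof. Your additional remarks on the base case via $\untg$ and on the placement of witnesses below the gap are a correct elaboration of what the paper leaves implicit, so the proposal matches the paper's approach.
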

\begin{lem} \label{lem:st-form3}
Suppose we are given $k\geq 1$  quantifier-free   formulas
$\delta_1,\dots ,\delta_k$
  with one free variable,
a set $O\subseteq \{1,\dots,k\}$, and points $a_1,d$ with $a_1\leq
d$. Let $F_1, \dots,F_k$ be defined as in equations (\ref{eq:tr-1})
and (\ref{eq:tr-2}) on page \pageref{eq:tr-1}. Then the following
are equivalent:
\begin{enumerate}
\item

 There are points $ a_1<a_2<\cdots <a_k\leq d$ such
that $\wedge_{i=1}^k F_i(a_i)$.
\item  There is  $b\in [a_1,d]$
such that $ \Part{\delta_1, \dots , \delta_k}{O}$ holds on
$[a_1,b]$.
\end{enumerate}
\end{lem}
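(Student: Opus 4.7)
My plan is to prove the two directions of the equivalence separately.

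Direction (2)$\Rightarrow$(1) is an immediate application of Lemma \ref{lem:def-fi}(1). If $\Part{\delta_1, \dots, \delta_k}{O}$ holds on $[a_1, b]$ via a partition $I_1, \dots, I_k$, then that lemma guarantees $F_i$ holds at every point of $I_i$. Selecting any $a_i \in I_i$ (the unique element when $i \in O$, and $a_1$ itself as the left endpoint of $I_1$) gives the required chain $a_1 < a_2 < \cdots < a_k \leq b \leq d$ with $\bigwedge_i F_i(a_i)$.

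Direction (1)$\Rightarrow$(2) is by induction on $k$. The base case $k=1$ is trivial since $F_1 = D_1 \equiv \delta_1$, so $b := a_1$ works. For the inductive step I exploit the definition $F_1 = D_1 \wedge \Psi$, where $\Psi$ is one of $\false \until F_2$, $D_2 \until F_2$, $D_1 \until F_2$, or $D_1 \until^* F_2$ depending on which of $1, 2$ belong to $O$. The goal is to extract from $\Psi$ a witness $w$ with $a_1 < w \leq a_2$ and $F_2(w)$; then the inductive hypothesis applied to the shorter sequence $\delta_2, \dots, \delta_k$ (in which $F_{j+1}$ plays the role of the new $F_j$) and the chain $w < a_3 < \cdots < a_k \leq d$ yields some $b \in [w, a_k] \subseteq [a_1, d]$ and a partition $J_2, \dots, J_k$ of $[w, b]$ for $\Part{\delta_2, \dots, \delta_k}{O'}$, where $O'$ is the appropriate index shift of $O$.

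The main obstacle is establishing the existence of such a witness $w \leq a_2$ in each of the four cases. For $\Psi = \false \until F_2$ the unique witness is the immediate successor of $a_1$, which is automatically $\leq a_2$. For $\Psi = D_2 \until F_2$ and $\Psi = D_1 \until F_2$ I argue by dichotomy: either the side predicate holds throughout $(a_1, a_2)$ and $a_2$ itself serves as the witness (using $F_2(a_2)$), or it fails at some $z \in (a_1, a_2)$, forcing every Until-witness to satisfy $w \leq z < a_2$. The subtlest case is $\Psi = D_1 \until^* F_2$: a witness comes with a split point $s \in (a_1, w]$ satisfying $D_1$ on $[a_1, s)$ and $F_2$ on $[s, w]$, and I re-witness either by taking $w' := s$ when $s \leq a_2$, or by taking $w' := a_2$ with split $s' := a_2$ when $s > a_2$, in which case $D_1$ on $[a_1, s) \supseteq [a_1, a_2)$ and $F_2(a_2)$ supply the required data.

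Finally I reassemble the full partition by prepending a $\delta_1$-interval to $J_2, \dots, J_k$. When $1 \in O$ I set $I_1 := \{a_1\}$ and extend $J_2$ leftward to cover $(a_1, w)$; the extended interval remains a valid $\delta_2$-interval because the side condition of $\Psi$ forces $\delta_2$ on $(a_1, w)$ (either $(a_1, w) = \emptyset$ in the $\false \until F_2$ case, or $D_2$ holds there in the $D_2 \until F_2$ case). When $1 \notin O$ I take $I_1 := [a_1, w)$, which is a $\delta_1$-interval because $D_1$ holds on $(a_1, w)$ (directly in the $D_1 \until F_2$ case, and via the left side of the $\until^*$ split in the remaining case). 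Routine case-checking against the four possibilities confirms that the resulting $I_1, I_2, J_3, \dots, J_k$ is a valid partition of $[a_1, b]$ witnessing $\Part{\delta_1, \dots, \delta_k}{O}$, completing the induction.
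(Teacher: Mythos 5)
Your direction (2)$\Rightarrow$(1) matches the paper's, and your overall plan for (1)$\Rightarrow$(2) --- a case analysis on the four shapes of $F_1$ from equation (\ref{eq:tr-2}) --- is the same engine the paper uses, just run in the opposite direction (the paper accumulates the partition left-to-right with the strengthened invariant that $\delta_i\wedge F_i$ holds on each $I'_i$, whereas you recurse on the tail $\delta_2,\dots,\delta_k$). That structural choice is fine. However, there is a genuine gap in your treatment of the case $\Psi = D_1 \until^* F_2$. You assert that a witness for $\until^*$ ``comes with a split point $s$ satisfying $D_1$ on $[a_1,s)$ and $F_2$ on $[s,w]$,'' i.e.\ that the right block of the partition witnessing $\Part{D_1,F_2}{\emptyset}$ on $[a_1,w]$ has a minimum. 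Over general linear orders it need not: the partition may be $[a_1,s]\cup(s,w]$ with the right block left-open (this is exactly the situation the Stavi modality exists to handle, and the reason $\until^*$ cannot be reduced to plain $\until$). In that configuration there is no point $w'$ with $F_2(w')$ such that $D_1$ holds on all of $[a_1,w')$, so your reassembly ``$I_1:=[a_1,w)$ is a $\delta_1$-interval via the left side of the split'' fails: for any $w'$ strictly inside the right block, $(s,w')$ carries $F_2$ (hence $\delta_2$), not $\delta_1$.

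The repair is to stop insisting that the first interval of the output partition reach all the way to the chosen witness. Keep $I_1:=J_1$ (the left block of the $\until^*$ split, where $D_1$ holds) and let the \emph{second} interval absorb the entire right block: pick $w\in J_2$ with $w\le a_2$ (truncating $J_1$ or $J_2$ at $a_2$ if necessary, which is where you use $F_2(a_2)$ and $D_1(a_2)$), apply the induction hypothesis at $w$ to get $K_2,\dots,K_k$, and set $I_2:=J_2\cup K_2$. This is legitimate because $F_2$ implies $\delta_2$, so $\delta_2$ holds on all of $J_2$, and because $2\notin O$ in this case, so $I_2$ is not required to be a singleton. This is precisely the three-way case split ($c<a_{l+1}$, $a_{l+1}\in J_2$, $a_{l+1}\in J_1$) that the paper performs in its last case. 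The other three cases of your argument ($\false\until F_2$, $D_2\until F_2$, $D_1\until F_2$) are correct as you state them, since there the Until semantics does give you a genuine witnessing point with the side predicate holding on the whole open interval up to it.
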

\begin{proof}
$\Leftarrow$ direction. Let $I_1,\dots ,I_k$ be a partition of
$[a_1,b]$ into    non-empty intervals
 such that $\delta_j$ holds on all points in $I_j$
and $I_i$ precedes $ I_j$ for $i<j$, and $I_i$ is a one-point
interval for $i\in O$. Let  us  choose any $a_i\in I_i$ for
$i=2,\dots ,k$. Then $\wedge_{i=1}^k F_i(a_i)$  holds by Lemma
\ref{lem:def-fi}(1).

$\Rightarrow$ direction. Let $F_i$ for $i=1, \dots, k$ be as in the
lemma.
 By induction on $l\leq k$ we prove that if there are points $
a_1<a_2<\cdots <a_l$ such that $\wedge_{i=1}^l F_i(a_i)$ then there
is $b \leq a_l$ such that $ \Part{\delta_1\wedge F_1, \dots ,
\delta_l\wedge F_l}{O \cap \{1,\dots, l\}}$ holds on $[a_1,b]$.

  The basis is immediate, take $b:=a_1$.

Inductive step: $l\mapsto l+1$.

By the inductive assumption there is $b'\leq a_l$ and a partition of
$[a_1,b']$ into $l$ non-empty intervals $I'_1, \dots , I'_l$ such
that $\delta_i\wedge F_i$ holds on $I'_i$ for $i\leq l$   and $I'_i$
is a one-point interval  for every $i\in O\cap \{1,\dots, l\}$.

In particular, $F_l(b')$ holds. Now, by inspecting  the definition
of $F_l$ according to Equation \eqref{eq:tr-2} on page
\pageref{eq:tr-2}, it is easy to construct the required interval and
its partition. In all four cases $I_i$ is defined as $I'_i$ for $i<
l$ and we explain how $I_l$ and $I_{l+1}$ are defined.

If     $l\in O$ and $l+1\in O$, then    $F_l:=D_l\wedge \false
\until F_{l+1}$. Note that $F_l$, holds at $b'$, therefore  $b'$ has
a successor $c $ and $c\leq a_{l+1}$ because $b'\leq a_l<a_{l+1}$.
Define $I_l:=I'_l$, $b:=c$ and $I_{l+1}:=\{b\}$. It is clear that
$I_1, \dots ,I_{l+1}$ is a required partition.

If     $l\in O$ and $l+1\notin O$,   then $F_l:= D_l \wedge D_{l+1}
\until F_{l+1}$; hence,  there is $c>b'$ such that $F_{l+1}(c)$ and
$\delta_{l+1}$ holds on $(b',c]$.
%
%
%
Define $I_l:=I'_l$.
 Define $b:=\min(c,a_{l+1})$,
  and  $I_{l+1}$ as $(b',b]$.
 It is clear that   $I_1, \dots ,I_{l+1}$ is a required partition.

If $l\notin O$ and $l+1\in O$, then  $F_l:= D_l \wedge
 D_{l} \until F_{l+1}$; hence,
  there is $c>b'$ such that $F_{l+1}(c)$ and $\delta_{l}$
holds on $[b',c)$.
 Define $b:=\min(c,a_{l+1})$.
Define $I_{l}$ as $I'_l\cup [b',b)$ and $I_{l+1} $ as $\{b\}$.
%

If $l\notin O$ and $l+1\notin O$, then   $F_l:= D_l \wedge
 D_{l} \until^* F_{l+1}$. Since $F_l$ holds at $b'$ there is $c>b'$ and a partition of $[b',c]$ into two non-empty
 intervals $J_1$ and $J_2$ such that $J_1<J_2$ and
$D_l$ holds at all points of $J_1$ and $F_{l+1}$ holds at all points
of $J_2$. If $c<a_{l+1}$ define $I_l:=I'_l\cup J_1$ and
$I_{l+1}:=J_2$ and $b:=c$. If $a_{l+1}\in J_2$ define $I_l:=I'_l\cup
J_1$, $I_{l+1}:=J_2\cap\{a\mid a\leq a_{l+1}\}$ and $b:=a_{l+1}$. If
$a_{l+1}\in J_1$, define  $I_l:=I'_l\cup (J_1 \cap  \{a\mid a<
a_{l+1}\}) $, $I_{l+1}:=\{a_{l+1}\} $ and $b:=a_{l+1}$. It is clear
that  $b\leq a_{l+1}$ and $I_1, \dots ,I_{l+1}$ is a required
partition.
%
\end{proof}

\subsection{Proof of Proposition
\ref{lem:occ-st}}\label{sect:proof-lem:occ-st}
 The proof of  Proposition  \ref{lem:occ-st} is very similar to the proof of
Proposition \ref{lem:snegopen}. Its  Corollary \ref{cor:contS} will
be used in the proof of Proposition \ref{lem:snegopen}.

%
Let $A_n(P_1, \dots, P_n,z_0,z_1)$  be $  \E x_1 \dots  \E x_n
                             \left(z_0<x_1 <  \dots
  < x_n<z_1 \right)     \wedge ~\bigwedge_{i=1}^{n}  P_i(x_i)$.
  We have to prove that $\neg A_n$ is  equivalent to a simple formula.

  $\neg A_n$  is equivalent to the disjunction
of $(z_0,z_1)=\emptyset$ and of $(z_0,z_1) \NEQ      \emptyset
\wedge \neg A_n$. The first disjunct is equivalent to a simple
formula. Therefore, it is sufficient to prove that the second
disjunct is equivalent to a simple formula.

 Below we  assume that $(z_0,z_1)$ is non-empty, and prove by induction on $n$.

\emph{Basis}: The case $n=1$ is trivial.

\emph{Inductive step}: $n\mapsto n+1$.

Since $(z_0,z_1)$ is non-empty, then  one of the following cases
holds:
\begin{description}
\item[Case 1] There is no occurrence of $P_1$ in $(z_0,z_1)$ or there is no occurrence of $P_{n+1}$ in $(z_0,z_1)$.

\item[Case 2] $z_0=\inf\{z\in (z_0,z_1)\mid P_1(z)\}$.
\item[Case 2$'$] $z_1=\sup\{z\in (z_0,z_1)\mid P_{n+1}(z) \}$.
   This case is dual to   case 2.

\item[Case 3]  $\inf\{z\in (z_0,z_1)\mid P_1(z)\}$ is  an element in
  $(z_0,z_1)$.

  \item[Case 3$'$] $\sup\{z\in (z_0,z_1)\mid P_{n+1}(z) \}$ is  an element in
  $(z_0,z_1)$. This case is dual to   case 3.

\item[Case 4]
\begin{enumerate}
  \item  Both  $c:=\inf\{z\in (z_0,z_1)\mid P_1(z)\}$  and
  $d:=\sup\{z\in (z_0,z_1)\mid P_{n+1}(z) \}$  are gaps in
  $(z_0,z_1)$
    and
    \item $c\geq d$.
\end{enumerate}
\item[Case 5]
  \begin{enumerate}
  \item Both  $c:=\inf\{z\in (z_0,z_1)\mid P_1(z)\}$  and
  $d:=\sup\{z\in (z_0,z_1)\mid P_{n+1}(z) \}$  are gaps in
  $(z_0,z_1)$  and
\item   $c<d$.
  \end{enumerate}
\end{description}
For each of these    cases we construct a  simple formula $\cond_i$
which describes it (i.e., Case $i$ holds iff $\cond_i$ holds), and
show that if $\cond_i$ holds, then $\neg A_{n+1}$ is equivalent to a
simple formula $\form_i$.
 Hence, $\neg A_{n+1}$
 is equivalent to a simple formula
$\vee_i [ \cond_i\wedge\form_i]$.


\medskip \noindent \textbf{Case 1}
 This case holds iff
 $\Part{\neg P_1(x)}{\emptyset}(z_0,z_1) \vee \Part{\neg P_{n+1}(x)}{\emptyset}(z_0,z_1)$
 In this case $\neg A_{n+1}$   is equivalent to $\true$.

 \medskip \noindent \textbf{Case 2}
Case 2 holds iff  $\KPLUS(P_1)(z_0)$.
In  this case $\neg A_{n+1}$
 iff  $\neg A_n(P_2,\dots ,P_{n+1},z_0,z_1)$ which is equivalent to a simple  formula by the inductive assumption.


\medskip \noindent \textbf{Case 2$'$} This case is dual to Case 2.

\medskip \noindent \textbf{Case 3}
This case holds iff there is (a unique)  $r_0\in (z_0,z_1)$  such
that $\neg P_1$ holds along $(z_0,r_0)$ and either $P_1(r_0)$ or
$\KPLUS(P_1)(r_0)$.

This $r_0$ is definable by the following  simple  formula, i.e.,
$r_0$ is a unique $z$ which satisfies it:
\begin{align} 
\INF(P_1,z_0, z,z_1):= z_0<& z <z_1 \wedge ``\mbox{no $P_1$ in
$(z_0,z)$}"  \wedge  \notag \\ & \wedge ( P_1(z) \vee \KPLUS(
P_1)(z))\notag
\end{align}
Hence, this case is described by  $\Elub{ z}{z_0}{z_1} \INF(P_1,z_0,
z,z_1)$ which is equivalent to a simple formula.

  In this  case $\neg A_{n+1}$  iff
  $\Elub{z} {z_0}{z_1}\big(\INF(P_1,z_0,z,z_1) \wedge \neg A_n(P_2,\dots
,P_n,z,z_1)\big)$. The inductive assumption  and Lemma
\ref{lem:closure1} imply that this  formula is equivalent to a
simple formula.

\medskip \noindent \textbf{Case 3$'$} This case is dual to Case 3.

\medskip \noindent \textbf{Case 4}
The first condition holds iff
\begin{itemize}
\item
  $z_0$ succeeded by $\neg P_1$ gap in $(z_0,z_1)$, i.e.
 $\bgamma^+(\neg P_1)(z_0) $ and $P_1$  holds at some point
in $(z_0,z_1)$,
 and
 \item $z_1$ preceded by $\neg
 P_{n+1}$ gap in $(z_0,z_1)$, i.e.,   $\bgamma^-(\neg P_{n+1})(z_1) $  and $P_{n+1}$ holds at some point
in $(z_0,z_1)$.
\end{itemize}
(Modalities $\bgamma^+$ and $\bgamma^-$ were defined in Sect.
\ref{subsect:TL}.) Hence, the first condition is equivalent to a
simple formula.

 If the
first condition holds, then the second condition holds iff in
$(z_0,z_1)$ no occurrence of $P_1$ precedes an occurrence of
$P_{n+1} $, i.e., iff ${ \texttt{ Part} (\langle    \neg P_1,\neg
P_{n+1}   \rangle ,\emptyset )}(z_0,z_1)$.
Hence, Case 4 is described by a simple formula.

In Case 4  $\neg A_{n+1}(P_1, \dots, P_{n+1},z_0,z_1)$ is equivalent
to $\true$.

\medskip \noindent \textbf{Case 5}
The first condition  is the same  as in Case 4.
If the first condition holds, then   $z$  is between $c$ and $d$ iff
$z$ satisfies the formula:
$$\Beet (z_0,z,z_1):=  \Elub{x_1}{z_0}{z}P_1(x_1)\wedge
\Elub{x_{n+1}}{z}{z_1}P_{n+1}(x_{n+1}).$$
Hence, this case can be described as the conjunction of the first
condition and $\E z \Beet (z_0,z,z_1)$ and this is equivalent to a
simple formula.

Note that in  this case $\E x_1 \dots  \E x_{n+1}
                             \left(z_0<x_1 <  \dots
  < x_{n+1}<z_1 \right)     \wedge ~\bigwedge_{j=1}^{n+1}  P_j(x_j)
  $ holds iff for every $z$ between  $c$ and $d$ one of the
  following  $2n-1$ conditions holds:
for $i=1,\dots ,n$:
\begin{gather*}
 \E x_1\dots  \E x_{n+1}
                             \left(z_0<x_1 <  \dots
  < x_{n+1}<z_1 \right)  \wedge x_i<z<x_{i+1}   \wedge ~\bigwedge_{j=1}^{n+1}
  P_j(x_j)\\
 \intertext{for $i=2,\dots ,n$:  }
\E x_1\dots  \E x_{n+1}
                             \left(z_0<x_1 <  \dots
  < x_{n+1}<z_1 \right)  \wedge x_i=z   \wedge ~\bigwedge_{j=1}^{n+1}
  P_j(x_j)
\end{gather*}
%
Hence, $\neg \E x_1 \dots  \E x_{n+1}
                             \left(z_0<x_1 <  \dots
  < x_{n+1}<z_1 \right)     \wedge ~\bigwedge_{j=1}^{n+1}  P_j(x_j)$
is equivalent to
\begin{align*}
        \E z (&\Beet  (z) \wedge
                             \bigwedge _{k=1}^{n}  \big[ \neg A_{k}(P_1, \dots, P_{k},z_0,z
                             ) \vee \neg A_{n+1-k }(P_{k+1 }, \dots, P_{n+1},z,z_1)
                             \big]
                             \\
                      \wedge  &      \bigwedge _{k=2}^{n}  \big[ \neg A_{k-1}(P_1, \dots, P_{k-1},z_0,z
                             )\vee \neg P_k(z)  \vee \neg A_{n+1-k }(P_{k+1 }, \dots, P_{n+1},z,z_1)
                             \big]
\end{align*}
%
%
By the inductive assumption  $ \neg A_{k}$ and  $\neg A_{n+1-k } $
are
   simple for $k=1,\dots, n$. Since $ \Beet$  is
 a  simple formula, and the set of simple formulas is closed under
conjunction, disjunction and   existential quantifier, we obtain   a
formalization of this case by a   simple  formula. This completes
the proof of Proposition \ref{lem:occ-st}.

By Proposition \ref{lem:occ-st}, Lemma \ref{lem:st-form3} and
standard logical equivalences we derive:
\begin{cor} \label{cor:contS}
\leavevmode
\begin{enumerate}
\item $\neg\Elub{z}{z_0}{z_1} \Part{ \delta'_1, \dots , \delta'_n}{O'} (z_0,z] $
%
 is
 equivalent   to a simple formula.

\item
$\neg\Elub{z}{z_0}{z_1} \Part{ \delta'_1, \dots , \delta'_n}{O'}
[z,z_1) $
 is
 equivalent   to a simple  formula.
\end{enumerate}
\end{cor}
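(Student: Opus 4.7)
The plan is to reduce both parts of the corollary to Proposition \ref{lem:occ-st} by using Lemma \ref{lem:st-form3} as a bridge. I first unfold the abbreviations: according to the Notation at the start of Section \ref{sect:s-comp}, $\Part{\delta'_1,\dots,\delta'_n}{O'}(z_0,z]$ stands for $\Part{\true,\delta'_1,\dots,\delta'_n}{\{1\}\cup\{i+1:i\in O'\}}[z_0,z]$. Setting $k=n+1$, $\delta_1=\true$, $\delta_i=\delta'_{i-1}$ for $i\geq 2$, and $O=\{1\}\cup\{i+1:i\in O'\}$, the subformula appearing in (1) becomes the statement that there exists $z\in(z_0,z_1)$ with $\Part{\delta_1,\dots,\delta_k}{O}$ holding on $[z_0,z]$.

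Next I apply Lemma \ref{lem:st-form3} with $a_1:=z_0$ and $d$ ranging over $(z_0,z_1)$. Since $k\geq 2$, the point $b$ supplied by the lemma is strictly greater than $z_0$, so the statement is equivalent to
\[
F_1(z_0)\wedge\E a_2\cdots\E a_k\bigl(z_0<a_2<\cdots<a_k<z_1\bigr)\wedge\bigwedge_{i=2}^{k}F_i(a_i),
\]
where $F_1,\dots,F_k$ are the $\TLs$ formulas produced from $\delta_1,\dots,\delta_k,O$ via equations (\ref{eq:tr-1})--(\ref{eq:tr-2}). Because we work inside the canonical $\TLs$-expansion, each $F_i$ is atomic. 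Negating the display above yields
\[
\neg F_1(z_0)\;\vee\;\neg\bigl[\E a_2\cdots\E a_k\bigl(z_0<a_2<\cdots<a_k<z_1\bigr)\wedge\textstyle\bigwedge_{i=2}^{k}F_i(a_i)\bigr].
\]
The left disjunct is equivalent to the basic partition formula $\Part{\neg F_1(x)}{\{1\}}[z_0,z_0]$, hence simple. The right disjunct, after renaming $F_{i+1}$ as $P_i$, is precisely the formula whose negation Proposition \ref{lem:occ-st} shows to be simple. Closure of simple formulas under disjunction (Lemma \ref{lem:closure1}) finishes part~(1).

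Part (2) is handled by the mirror argument: reading the partition right-to-left and using the $\since/\ssince$ versions of equations (\ref{eq:tr-1})--(\ref{eq:tr-2}) produces temporal formulas $F'_1,\dots,F'_k$, atomic in the canonical expansion, so that $\Elub{z}{z_0}{z_1}\Part{\delta'_1,\dots,\delta'_n}{O'}[z,z_1)$ becomes $F'_k(z_1)\wedge\E a_1\cdots\E a_{k-1}(z_0<a_1<\cdots<a_{k-1}<z_1)\wedge\bigwedge_{i=1}^{k-1}F'_i(a_i)$, and Proposition \ref{lem:occ-st} again controls the negation of the existential part. The only delicate point is to check that pinning the boundary witness at $z_0$ (respectively at $z_1$) really does detach a single $\TLs$-atomic boundary condition from a purely interior existential condition on the remaining $k-1$ witnesses, so that Proposition \ref{lem:occ-st} applies cleanly to the latter; this is immediate from the statement of Lemma \ref{lem:st-form3} but deserves an explicit verification.
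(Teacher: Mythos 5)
Your proposal matches the paper's proof essentially step for step: the same substitution $k=n+1$, $\delta_1=\true$, shifted $O$, the same use of Lemma \ref{lem:st-form3} to convert the bounded existential over $z$ into a conjunction of an atomic boundary condition $F_1(z_0)$ with an ordered existential over interior witnesses, and the same appeal to Proposition \ref{lem:occ-st} for the negation of the latter, with part (2) by mirror symmetry. The argument is correct and requires no further comment.
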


\begin{proof}
\leavevmode
\begin{enumerate}
\item
 Set $k:=n+1$, $\delta_1:=\true$,
  $\delta_{i+1}:=\delta'_i$ for $i=1,\dots ,n$ and
$O:= \{1\}\cup \{i+1\mid i\in O'\}$. Observe:  $\Part{ \delta'_1,
\dots , \delta'_n}{O'} (z_0,z]$ iff $\Part{\delta_1, \dots ,
\delta_k}{O}[z_0,z]$.

Let $F_i$ be defined as in Lemma \ref{lem:st-form3}.  Then $ \E x_2
\dots  \E x_{k-1} z_0<x_2< \dots <x_{k-1} <x_k \wedge F_1(z_0)\wedge
\bigwedge_{i=2}^k F_i(x_i)$ iff $\exists z  (z_0<z \leq x_k \wedge
\Part{\delta_1, \dots , \delta_k}{O}[z_0,z])$.

Hence, $\neg \Elub{z}{z_0}{z_1} \Part{ \delta'_1, \dots ,
\delta'_n}{O'} (z_0,z] $ is equivalent to $\neg F_1(z_0) \vee \neg \
\E x_2 \dots  \E x_{k} z_0<x_2< \dots <x_{k-1} <x_k<z_1 \wedge
\bigwedge_{i=2}^k F_i(x_i)$. The first disjunct is an atom (in the
canonical expansion) and the second disjunct is equivalent to a
simple formula by Proposition \ref{lem:occ-st}. Therefore, $\neg
\Elub{z}{z_0}{z_1} \Part{ \delta'_1, \dots , \delta'_n}{O'} (z_0,z]
$ is equivalent to
 a simple formula.

 \item is the
mirror image of (1).
 \qedhere
\end{enumerate}
\end{proof}

\subsection{Proof of Proposition  \ref{lem:snegopen}}\hfill\\
\textbf{Convention.} We often will say ``a formula is simple"
instead of ``a formula is equivalent to a simple formula." In all
such  cases equivalence to a simple formula is proved by Lemma
\ref{lem:closure1} and by  standard logical transformations and/or
using the inductive hypotheses.

\label{sect:proof-snegopen}
We proceed by induction on $n$.

\emph{Basis.} The    case $n=1$ is immediate.

\emph{Inductive step} $n\mapsto n+1$.

$\neg \Parts{1}{n+1}{O} (z_0,z_1)$  is equivalent to the disjunction
of $(z_0,z_1)=\emptyset$ and of $(z_0,z_1) \NEQ      \emptyset
\wedge \neg \Parts{1}{n+1}{O} (z_0,z_1)$. The first disjunct is
equivalent to a simple formula. Therefore, it is sufficient to prove
that the second disjunct is equivalent to a simple formula.

From now on we assume that  $(z_0,z_1)$ is non-empty.

Observe that one of the following cases holds:
\begin{description}
%

\item[Case 1] $\delta_1$ holds on all points in  $(z_0,z_1)$.

\item[Case 1$'$] $\delta_{n+1}$ holds on all points in  $(z_0,z_1)$.
 This case is dual to   case 1.

\item[Case 2]
$z_0=\inf\{z\in (z_0,z_1)\mid \neg \delta_1(z)\}$ or $z_1=\sup\{z\in
(z_0,z_1)\mid \neg \delta_{n+1}(z)\}$.

\item[Case 3]

  $\inf\{z\in (z_0,z_1)\mid\neg  \delta_1(z)\}$ is  an element in
  $(z_0,z_1)$.

  \item[Case 3$'$]
    $\sup\{z\in (z_0,z_1)\mid \neg \delta_{n+1}(z) \}$ is  an element in
  $(z_0,z_1)$. This case is dual to  case 3.

\item[Case 4]
  Both  $c:=\inf\{z\in (z_0,z_1)\mid \neg \delta_1(z)\}$  and
  $d:=\sup\{z\in (z_0,z_1)\mid \neg \delta_{n+1}(z) \}$  are gaps in
  $(z_0,z_1)$
    and
     $c> d$.

\item[Case 5]
    Both  $c:=\inf\{z\in (z_0,z_1)\mid  \neg\delta_1(z)\}$  and
  $d:=\sup\{z\in (z_0,z_1)\mid \neg \delta_{n+1}(z) \}$  are gaps in
  $(z_0,z_1)$  and
    $c<d$.
\item[Case 6]
  Both  $c:=\inf\{z\in (z_0,z_1)\mid \neg \delta_1(z)\}$  and
  $d:=\sup\{z\in (z_0,z_1)\mid \neg \delta_{n+1}(z) \}$  are gaps in
  $(z_0,z_1)$
    and
     $c= d$.

\end{description}
For each of these    cases we construct a  simple formula $\cond_i$
which describes it (i.e., Case $i$ holds iff $\cond_i$ holds), and
show that if $\cond_i$ holds, then $\neg \Parts{1}{n+1}{O}
(z_0,z_1)$ is equivalent to  a simple formula $\form_i$.
 Hence, $\neg
\Parts{1}{n+1}{O} (z_0,z_1)$ is equivalent to a simple formula
$\vee_i [ \cond_i\wedge\form_i]$.


\medskip \noindent \textbf{{Case 1}}
is described by ${ \texttt{Part} (\langle   \delta_1
\rangle ,\emptyset )}(z_0,z_1)$.
%
%
%
 In this case
 $\neg \Parts{1}{n+1}{O} (z_0,z_1)$ is equivalent to
$\neg\Elub{z}{z_0}{z_1}
 \Parts{1}{n+1}{O} [z,z_1)$, and by
Corollary  \ref{cor:contS}  this is a simple  formula.

\medskip \noindent \textbf{Case 1$'$}
  This case is dual to  Case 1.

\medskip \noindent
 \textbf{{Case 2}}\sloppy{
 This case is described by
$  \KPLUS(\neg \delta_1)(z_0) \vee \KMINUS(\neg \delta_{n+1})(z_1)$.
In this case $\neg \Parts{1}{n+1}{O} (z_0,z_1)$ is equivalent to
$\true$.}

Note that in  the above   cases we have not used the  inductive
assumption. Case  6 will be also proved directly. However, in cases
3-5  we will use the inductive assumption.

We  introduce notations and state an observation which will be used
several times.

For a set $O$ of natural numbers  and $i\in \nat$,  we denote by
$\sh{i}$ the set $O$ shifted by $i$, i.e.,  $\sh{i}:=\{j \mid j>0
\wedge j+i\in O\}$.

Define
\begin{align*} C^{<i}(z_0,z):= & \begin{cases}
 \mbox{``$z$ is the successor of $z_0$''} & \mbox{for }i=1\\
\Parts{1}{i-1}{O\cap \{1,\dots, i-1\}}
(z_0,z) & \mbox{for $i=2,\dots ,n+2$} \\
\end{cases}\\
  C^{>i}(z,z_1):=&\begin{cases}
 \mbox{``$z_1$ is the successor of $z$"}  & \mbox{for }i=n+1\\
\Parts{i+1}{n+1}{O_{sh(i)} }
(z,z_1) & \mbox{for $i=0,\dots ,n$}\\
\end{cases}
\end{align*}
 For $i=1,\dots , n+1$ define
\begin{align*}
%
C^{\leq i}(z_0,z):= &  C^{<i}(z_0,z) \vee C^{<i+1}(z_0,z )\\
%
%
%
%
 C^{\geq i}(z,z_1):= &  C^{>i}(z,z_1) \vee C^{>i-1}(z,z_1)\\
 %
A_i(z_0,z,z_1):= &\begin{cases}
 C^{<i}(z_0,z)\wedge  \delta_{i}(z) \wedge  C^{>i}(z,z_1) & \mbox{if
 } i\in O\\
 C^{\leq i}(z_0,z)\wedge  \delta_{i}(z) \wedge
C^{\geq i}(z,z_1) & \mbox{otherwise}\\
\end{cases}
\end{align*}
%
From these definitions we obtain the following equivalences:
\begin{gather}\label{eq:eqA}
 \Parts{1}{n+1}{O} (z_0,z_1)
             \Leftrightarrow   \Elub{z}{z_0}{z_1}A_i  \qquad \mbox {for } i\in 1,\dots ,n+1\\
 \intertext{and   if   $(z_0,z_1) \NEQ      \emptyset  $, then  }
  \Parts{1}{n+1}{O} (z_0,z_1)  \Leftrightarrow \Alub{z}{z_0}{z_1} \big( \bigvee_{i }
  A_i\big) \label{eq:eqC}
\end{gather}
Since, we assumed that $(z_0,z_1)$ is non-empty,  by
\eqref{eq:eqA}-\eqref{eq:eqC} we have 
\begin{gather*}
\neg \Parts{1}{n+1}{O} (z_0,z_1) \intertext{ is equivalent  to}
               \Elub{z}{z_0}{z_1} \big( \bigwedge_{i } \neg A_i \big) \intertext{and to } \Alub{z}{z_0}{z_1}  \big(
\bigwedge_{i } \neg A_i\big)
\end{gather*}

Hence, for every $\varphi(z_0,z,z_1)$
\begin{gather*}\Elub{z}{z_0}{z_1}
\varphi(z)\wedge \neg \Parts{1}{n+1}{O} (z_0,z_1) \intertext{ is
equivalent to}
  \Elub{z}{z_0}{z_1}\Big( \varphi(z) \wedge \big( \bigwedge_{i } \neg
  A_i
 \big)\Big) \intertext{ is equivalent to}
\Elub{z}{z_0}{z_1} \Big( \big( \varphi(z)\wedge \bigwedge_{i
\in\{2,\dots,n\} } \neg A_i \big) \wedge \big( \varphi(z)  \wedge
\neg A_1\wedge\neg  A_{n+1} \big)\Big)
 \end{gather*}
 By the inductive assumption, the definition of $A_i$,  and Lemma \ref{lem:closure1},
  we obtain that $\neg A_i$ are simple formulas
   for $i \in\{2,\dots,n\}$.  Similarly, if $1 \in O$ (respectively, $n+1\in O$), then
   $\neg A_1$ (respectively, $\neg A_{n+1}$) is equivalent to a simple formula.  The set
of simple formulas is closed under $\wedge$, $\vee$ and $\exists$.
Hence,

\begin{obs} \label{obs}
Assume that  $\vp(z)$ is equivalent to a simple formula, and if $1
\notin O$, then  $\varphi(z) \wedge \neg A_1$ is equivalent to a
simple formula,  and   if $n+1 \notin O$, then $\varphi(z)
\wedge\neg A_{n+1}$ is  equivalent to a simple formula. Then
$\Elub{z}{z_0}{z_1} \varphi(z)\wedge \neg \Parts{1}{n+1}{O}
(z_0,z_1)$
  is
equivalent to a simple   formula.
\end{obs}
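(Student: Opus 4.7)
The plan is to invoke the equivalence (\ref{eq:eqC}) displayed in the excerpt, which (using that $(z_0,z_1)$ is nonempty) gives $\neg\Parts{1}{n+1}{O}(z_0,z_1) \Leftrightarrow \Elub{z}{z_0}{z_1}\bigl(\bigwedge_{i=1}^{n+1}\neg A_i\bigr)$. Since $\neg\Parts{1}{n+1}{O}(z_0,z_1)$ does not mention $z$, the target formula is equivalent to $\Elub{z}{z_0}{z_1}\bigl(\varphi(z) \wedge \bigwedge_{i=1}^{n+1}\neg A_i(z_0,z,z_1)\bigr)$. By closure of simple formulas under $\wedge$, $\vee$ and bounded $\exists$ (Lemma~\ref{lem:closure1}), it suffices to show that the body of this bounded quantifier is equivalent to a simple formula.

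For the interior indices $i \in \{2,\dots,n\}$, $A_i$ is a conjunction of $\delta_i(z)$ with a factor of the form $C^{<i}$ or $C^{\leq i} = C^{<i}\vee C^{<i+1}$ on $(z_0,z)$, and a factor of the form $C^{>i}$ or $C^{\geq i} = C^{>i}\vee C^{>i-1}$ on $(z,z_1)$. Each partition formula appearing here has strictly fewer than $n+1$ intervals, so by the inductive hypothesis of Proposition~\ref{lem:snegopen} its negation is simple; conjoining with the trivially simple $\neg\delta_i(z)$ and distributing gives that $\neg A_i$ itself is simple. Hence $\bigwedge_{i=2}^{n}\neg A_i$ is simple.

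For the extreme indices we split on membership in $O$. If $1\in O$, then $A_1 = C^{<1}(z_0,z)\wedge\delta_1(z)\wedge C^{>1}(z,z_1)$ with $C^{<1}$ the simple successor predicate and $C^{>1}$ an $n$-interval partition, so $\neg A_1$ is simple by the inductive hypothesis. If $1\notin O$, however, $C^{\geq 1}$ absorbs the factor $C^{>0} = \Parts{1}{n+1}{O}(z,z_1)$ whose negation is not available from the inductive hypothesis; this is precisely why the Observation bakes in the assumption that $\varphi(z)\wedge\neg A_1$ is simple. The situation for $A_{n+1}$ is symmetric: if $n+1 \in O$ the inductive hypothesis suffices, and otherwise we invoke the corresponding hypothesis on $\varphi(z)\wedge\neg A_{n+1}$.

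Combining these, $\varphi(z)\wedge\neg A_1\wedge\neg A_{n+1}$ is simple in every case --- either as a conjunction of three simple formulas, or by absorbing one or both of the problematic $\neg A_j$ into the Observation's hypotheses on $\varphi$. Conjoining with the simple $\bigwedge_{i=2}^{n}\neg A_i$ and applying closure under bounded existential quantification (Lemma~\ref{lem:closure1}) yields the desired simple formula. There is no real obstacle: the Observation is a bookkeeping lemma that quarantines the two ``extreme'' negations $\neg A_1$ and $\neg A_{n+1}$ --- the only ones of the same combinatorial complexity as the formula we are trying to negate --- and delegates them to the caller, so that the inductive step of Proposition~\ref{lem:snegopen} can dispatch Cases 3--5 uniformly by choosing an appropriate $\varphi$.
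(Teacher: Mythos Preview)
Your argument is correct and follows the paper's approach essentially verbatim: reduce to $\Elub{z}{z_0}{z_1}\bigl(\varphi(z)\wedge\bigwedge_i\neg A_i\bigr)$, handle the interior $\neg A_i$ (and the boundary ones when $1\in O$ or $n+1\in O$) by the inductive hypothesis of Proposition~\ref{lem:snegopen}, and delegate the remaining boundary cases to the Observation's hypotheses. One small point of precision: the passage from $\Elub{z}{z_0}{z_1}\varphi(z)\wedge\neg\Parts{1}{n+1}{O}(z_0,z_1)$ to $\Elub{z}{z_0}{z_1}\bigl(\varphi(z)\wedge\bigwedge_i\neg A_i\bigr)$ uses not only \eqref{eq:eqC} but also the universal form coming from \eqref{eq:eqA} (so that $\neg\Parts{1}{n+1}{O}(z_0,z_1)$ is equivalent to $\bigwedge_i\neg A_i(z_0,z,z_1)$ for \emph{each} $z\in(z_0,z_1)$, allowing the substitution inside the quantifier); the paper invokes both equivalences explicitly.
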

%
In cases 3-5 we will use this observation with some instances of
$\varphi$.
%

\medskip \noindent \textbf{{Case 3}}
This case holds iff there is (a unique)  $r_0\in (z_0,z_1)$  such
that $\delta_1$ holds along $(z_0,r_0)$ and $\neg\delta_1(r_0)\vee
 \KPLUS(\neg \delta_1)(r_0)$.

This $r_0$ is definable by the following  simple  formula, i.e.,
$r_0$ is a unique $z$ which satisfies it:
\begin{align}   
\INF_{\neg\delta_1}(z_0, z,z_1):= z_0<& z <z_1 \wedge
(\suc(z_0,z)\vee { \texttt{ Part} (\langle   \delta_1 \rangle
,\emptyset )}(z_0,z)) \wedge  \notag \\ & \wedge ( \neg \delta_1(z)
\vee \KPLUS(\neg \delta_1)(z)) \notag
\end{align}
Hence, this case is described by a simple formula $\Elub{
z}{z_0}{z_1} \INF_{\neg\delta_1}(z_0, z,z_1)$.

 By Observation \ref{obs}
it is sufficient to prove  that (1)  if $1\notin O$ then
$\INF_{\neg\delta_1}    \wedge \neg A_1$  is equivalent to a simple
formula, and (2) if $n+1\notin O$, then
  $\INF_{\neg\delta_1}   \wedge  \neg A_{n+1}$ is   equivalent to a simple  formula.



Note that $  \neg \delta_1(z) \vee \KPLUS(\neg \delta_1)(z) $
implies $\neg \big( \delta_1(z)  \wedge \Parts{1}{n+1}{O}
(z,z_1)\big)$. Therefore, by the definition of $A_1$ for the case
when $1\notin O$, and standard logical transformations we obtain
that $ \INF_{\neg\delta_1}
  \wedge \neg A_1$ is equivalent to  $ \INF_{\neg\delta_1} \wedge
  \big( \neg C^{\leq 1} \vee
\neg   \delta_1(z)\vee \neg \Parts{2}{n+1}{O_{sh(1)}} (z,z_1)$. The
last formula is equivalent to a simple formula by the inductive
assumption and standard logical equivalences.

%

%
 If
  $n+1\notin O$, then
$\INF_{\neg\delta_1}   \wedge \neg A_{n+1}$ is
 equivalent to $$\INF_{\neg\delta_1} (z_0, z,z_1)  \wedge (\neg  C^{\geq n+1}(z,z_1)\vee \neg \delta_{n+1}(z)\vee
 \neg C^{\leq n+1}(z_0,z)).
$$
$\neg  C^{\geq n+1}(z,z_1)$ is a simple formula by the induction
basis. Note that $\INF_{\neg\delta_1} (z_0, z,z_1)$ implies
$suc(z_0,z)$ or ``$\delta_1$ holds along the interval $(z_0,z)$." By
Case 1 the conjunction of ``$\delta_1$ holds along the interval
$(z_0,z)$" and $\neg C^{\leq n+1}(z_0,z))$ is a simple formula.
Therefore, $\INF_{\neg\delta_1} \wedge \neg A_{n+1}$ is equivalent
to
 a simple  formula.

 \medskip \noindent \textbf{{Case 3$'$}} This case is dual to case 3.


 \medskip \noindent \textbf{Case 4}
The conjunction of the following conditions expresses by a simple
formula that $z$ is in the interval $(d,c)$:
\begin{itemize}
\item $z_0$ succeeded by $\delta_1$ gap in $(z_0,z_1)$ -
$\bgamma^+(\delta_1)(z_0)$ and $\neg \delta_1$ holds at some point
in $(z_0,z_1)$.
\item $z_1$ preceded by $\delta_{n+1}$ gap in $(z_0,z_1)$ -
$\bgamma^-(\delta_{n+1})(z_1)$ and $\neg \delta_{n+1}$ holds at some
point in $(z_0,z_1)$.
\item $\delta_1$ holds along $(z_0,z)$ and  $\delta_{n+1}$ holds along
$(z,z_1)$.
\end{itemize}
Let us denote this conjunction by $\In1 (z_0,z,z_1)$.

Hence, this case  holds iff $\Elub{z}{z_0}{z_1} \In1(z_0,z,z_1)$.

By Observation \ref{obs}  it is sufficient to show that (1)  if
$1\notin O$, then $\In1(z_0,z,z_1)    \wedge \neg A_1$  is
equivalent to a simple formula, and (2) if $n+1\notin O$, then
 $ \In1(z_0,z,z_1)   \wedge  \neg A_{n+1}$ is   equivalent to a simple  formula.

if $1\notin O$ then $\In1(z_0,z,z_1)\wedge \neg A_1(z_0,z,z_1)$ is
equivalent to
\[\In1(z_0,z,z_1)\wedge \big(
\neg \delta_1(z)\vee \neg  C^{\leq 1} (z_0,z) \vee   (\neg  C^{>1}
(z,z_1)\wedge \neg  C^{>0}(z,z_1))\big).\]
$\In1(z_0,z,z_1)$ implies that $\delta_{n+1}$ holds along $(z,z_1)$,
therefore, by Case 1$'$  both $\In1 \wedge  \neg C^{>0}(z,z_1)$ and
$\In1\wedge  \neg C^{>1}(z,z_1)$ are simple. By the basis of
induction
 $\neg  C^{\leq 1}$
is   simple. Hence, $\In1(z_0,z,z_1)\wedge \neg A_1(z_0,z,z_1)$ is
simple.

Similar   arguments show that if $n+1\notin O$, then
$\In1(z_0,z,z_1)\wedge \neg A_{n+1}(z_0,z,z_1)$  is   simple.

 \medskip \noindent \textbf{Case 5}
It is easy to write a simple  formula $\btw(z_0, z,z_1)$ which
expresses that $z$ is in the interval  $(c, d)$.
 $\btw(z_0, z,z_1)$ can be defined as the conjunction of $z_0<z<z_1$ and of
 \begin{itemize}
\item
$z_0$ succeeded by $\delta_1$ gap in $(z_0,z )$ -
$\bgamma^+(\delta_1)(z_0)$ and $\neg \delta_1$ holds at some point
in $(z_0,z )$.

\item $z_1$ preceded by $\delta_{n+1}$ gap in $(z,z_1)$ -
$\bgamma^-(\delta_{n+1})(z_1)$ and $\neg \delta_{n+1}$ holds at some
point in $(z,z_1)$.
\end{itemize}
%
 Hence, this case  holds iff
$\Elub{z}{z_0}{z_1} \btw(z_0,z,z_1)$.

By Observation \ref{obs}  it is sufficient to show that (1)  if
$1\notin O$, then $\btw(z_0,z,z_1)\wedge \neg A_1(z_0,z,z_1)$ is
equivalent to a simple formula,  and (2) if $n+1\notin O$, then
$\btw(z_0,z,z_1)\wedge \neg A_{n+1}(z_0,z,z_1)$ is equivalent to a
simple formula.
 Since  $\btw$ 
implies $\neg  C^{\leq 1}$ it follows that
 $\btw\wedge \neg A_1$ is equivalent to
$\btw$. Since   $\btw$
implies $\neg C^{\geq n+1}$ it follows that $\btw\wedge \neg
A_{n+1}$ is equivalent to $\btw$.  Therefore, both $\btw\wedge \neg
A_1$ and $\btw\wedge \neg A_{n+1}$ are simple. 

\medskip \noindent\textbf{Case 6}
 Both  $c:=\inf\{z\in (z_0,z_1)\mid\neg \delta_1(z)\}$  and
  $d:=\sup\{z\in (z_0,z_1)\mid \neg \delta_{n+1}(z) \}$  are gaps in
  $(z_0,z_1)$
    and
     $c\geq d$ iff
 the conjunction of the following   holds:
\begin{enumerate}
\item $z_0$ succeeded by $\delta_1$ gap in $(z_0,z_1)$.
\item $z_1$ preceded by $\delta_{n+1}$ gap in $(z_0,z_1)$.
\item ${ \texttt{ Part} (\langle   \delta_1,\delta_{n+1}
\rangle ,\emptyset )}(z_0,z_1)$.
\end{enumerate}
%
If (1)-(3) holds, then $ d<c$ iff $F(z_0)$ defined as
$\delta_1\until (\delta_1 \wedge \untg(\delta_1,\delta_2))(z_0)$
holds,
%
%
where $\untg$ is defined on page
 \pageref{lem:trans-form2}.
%

Hence, this case can be described by the conjunction of (1)-(3) 
 and
 $\neg F(z_0)$. (1) and (2) are expressed by simple formulas like in Case
 4;
 (3) and
 $\neg F(z_0)$ are simple formulas. Therefore, this case is
 described by a simple formula.

In this case $\Parts{1}{n+1}{O} (z_0,z_1)$ holds iff there is $i$
such that
 $\Part{
\delta_1, \dots , \delta_i}{O}$ holds on $ (z_0,c)$ and  $\Part{
\delta_{i}, \dots , \delta_{n+1}}{O}$ or
 $\Part{
\delta_{i+1}, \dots , \delta_{n+1}}{O} $ holds on $(c,z_1)$.
Applying  Lemma  \ref{lem:trans-form2} to the tuple  $ \langle
\true, \delta_1,\dots ,\delta_i \rangle$,
  $O:=\{1\}\cup \{j+1 \mid j\in O\wedge j\leq i\} $  and $\delta_1$, we
  obtain a temporal formula
$F_i$ such that $F_i(z_0)$ iff  $\Part{ \delta_1, \dots ,
\delta_i}{O} $ holds on $(z_0,c)$. By  the mirror  arguments there
is a
temporal formula $H_i$ such that $H_i(z_1)$ iff $ \Part{ \delta_{i},
\dots , \delta_{n+1}}{O}$ holds on  $(c,z_1) $. Hence, in this case
$\neg \Parts{1}{n+1}{O} (z_0,z_1)$ is equivalent to
$$\bigwedge_{i=1}^n \big(\neg F_i(z_0) \vee ( \neg H_i(z_1) \wedge
 \neg H_{i+1}(z_1) )\big).$$

\section{Related Works} \label{sect:related}
  Our proof is very similar to the proof of Kamp's theorem in
\cite{Rab14}. The only novelty of  our proof are  partition
formulas. Simple  partition formulas generalize $\EA$-formulas which
played a major role in the proof of  Kamp's theorem \cite{Rab14}.
Roughly speaking an $\EA$-formula is a normal partition formula
which uses only basic partition expressions $\Parts{1}{n}{O} $ with
the following restriction:  for $i<n$, if $i\notin O$ then $i+1\in
O$. This restriction implies that if a partition $I_1,\dots, I_n$
witnesses that an interval $[t,t']$ of $\mM$ satisfies
$\Parts{1}{n}{O} $, then all intervals $I_i$ have endpoints in
$\mM$. Over the Dedekind complete orders all intervals have
end-points and every partition expression is equivalent to a
disjunction of the restricted partition expressions; however, over
general linear orders $\Part{P_1(x), P_2(x)
 }{\emptyset}$  is not equivalent to a positive boolean combination
 of restricted partition expressions.

As far as we know, there are only two published proofs of Stavi's
theorem. One  is based on   separation  in  Chapter 11 of
\cite{GHR94}, and the other is   based on games in  \cite{GHR93}
(reproduced in Chapter 12 of  \cite{GHR94}). They are   much  more
complicated than the proofs of Kamp's theorem in \cite{GHR94}.

 A temporal logic has the
\emph{separation} property if its formulas can be equivalently
rewritten as a boolean combination of formulas, each of which
depends only on the past, present  or future. The separation
property was introduced by  Gabbay \cite{Gab81}, and
 surprisingly,  a temporal logic which can express  $\G $ and  $\Gb $ has the separation property (over a  class $\mC$ of structures) iff it is
expressively  complete for \FOMLOiB  over $\mC$.

 In the proof based
on separation, a special temporal  language $L*$ is carefully
designed. The formulas of $L*$ are evaluated over Dedekind-complete
chains. For every chain $\mM$ its completion $\mM^c$ is defined. It
is shown: (1)  $L*$ has the separation property over the completions
of chains; (2) for every $\vp\in L*$ there is a formula $\psi\in
\TLs$ such that $\mM, t\models \psi$ iff $\mM^c,t\models \vp$, and
(3) for every formula $\xi(x)\in \FOMLOiB$ there is $\vp\in L*$ such
that $\mM, t\models \xi$ iff $\mM^c,t\models \vp$.

In the game-based proof for every  chain $\mM$ and $r\in \nat$  a
chain $\mM_r$  is defined.
 $\mM_r$   is the completion of $\mM$ by the gaps definable
by $\TLs$ formulas of the nesting depth $r$. Then,  special games on
the temporal structures are considered. The game arguments are
easier to grasp, then the separation ones, but they use complicated
inductive assertions.

%
Our proof    
 avoids completions and games  and  separates
general logical equivalences and temporal arguments. The proof is
similar to our proof of Kamp's theorem \cite{Rab14}; yet it is
longer because it treats some additional cases related to gaps in
time flows.
\section*{Acknowledgement}
I would like to thank an anonymous referee for  insightful
suggestions.



\begin{thebibliography}{???}
\bibitem[Gab81]{Gab81}
D.~Gabbay.
\newblock Expressive functional  completeness in tense logic (preliminary  report).
\newblock    In U. Monnich,  editor,  Aspects of Philosophical  Logic,  pages 91-117.
Reidel, Dordrecht, 1981.

%
\bibitem[GHR93]{GHR93}
 D. M. Gabbay, I. M. Hodkinson, and M. A. Reynolds
\newblock Temporal expressive completeness in the presence of gaps
\newblock In Logic Colloquium '90,
 Lecture Notes in Logic 2, Springer-Verlag,  1993, pp. 89-121.
\bibitem[GHR94]{GHR94}
D.~Gabbay, I.~Hodkinson, and M.~Reynolds.
\newblock Temporal logic: Mathematical Foundations and Computational Aspects.
\newblock {\em Oxford University Press, 1994.}

\bibitem[GPSS80]{GPSS80}
D.~Gabbay, A. Pnueli, S. Shelah and J. Stavi.
\newblock On the Temporal Analysis of Fairness.
\newblock {\em In POPL 1980,
  pp. 163-173, 1980.}
%


\bibitem[Kam68]{Kam68}
H.W. Kamp.
\newblock Tense logic and the theory of linear order.
\newblock {\em Phd thesis, University of California, Los Angeles, 1968.}

\bibitem[Pnu77]{Pnu77}
A.~Pnueli (1977).
\newblock The temporal logic of programs.
\newblock In {\em Proc. IEEE 18th Annu. Symp. on Found. Comput. Sci.}, pages
  46--57, New York, 1977.

\bibitem[Rab14]{Rab14}
A.  Rabinovich.
\newblock A Proof of Kamp's Theorem.
\newblock { Logical Methods in Computer Science 10(1),  2014.}



%
%

%


%
%
%







\end{thebibliography}
\end{document}